\def\mytitle{Efficient Second-Order Shape-Constrained Function Fitting}
	\title{%
		\mytitle%
		\thanks{%
			The first author is supported in part by 
			National Science Foundation Grant 1718533.
			The last author is supported by the 
			Natural Sciences and Engineering Research Council of Canada 
			and the Canada Research Chairs Programme.
		}%
	}
	\author{
		David Durfee%
			\thanks{%
				Georgia Institute of Technology \;$\cdot$\;
				\texttt{\{ddurfee,ygao380\}\,@\,gatech.edu}
			}
		\and
		Yu Gao${}^{\fnsymbol{footnote}}$
		\and
		Anup B. Rao%
			\thanks{%
				Adobe Research \;$\cdot$\;
				\texttt{anuprao\,@\,adobe.com}
			}
		\and
			Sebastian Wild%
			\thanks{%
				University of Waterloo \;$\cdot$\;
				\texttt{wild\,@\,uwaterloo.ca}
			}%
	}
\begin{document}

\maketitle

\begin{abstract}
We give an algorithm to compute a one-dimensional shape-constrained function that best fits 
given data in weighted-$L_{\infty}$ norm. We give a \emph{single} algorithm that works for
a variety of commonly studied shape constraints including monotonicity, Lipschitz-continuity 
and convexity, and more generally, any shape constraint expressible by bounds on first- and/or 
second-order differences.
Our algorithm computes an
approximation with additive error $\epsilon$ 
in $O\left(n \log \frac{U}{\epsilon} \right)$ time, where $U$
captures the range of input values. 
We also give a simple greedy algorithm that runs in
$O(n)$ time for the special case of unweighted $L_{\infty}$ convex regression.
These are the first (near-)linear-time algorithms for second-order-constrained function fitting.
To achieve these results,
we use a novel geometric interpretation of the underlying dynamic programming problem.
We further show that a generalization of the corresponding problems to directed acyclic
graphs (DAGs) is as difficult as linear programming.
\end{abstract}

\pagenumbering{arabic}

\section{Introduction}
\label{sec:intro}

We consider the fundamental problem of finding a function $f$ that approximates 
a given set of data points $(x_1,y_1),\ldots,(x_n,y_n)$ in the plane 
with smallest possible error,
\ie, $f(x_i)$ shall be close to $y_i$ (formalized below),
subject to shape constraints on the allowable functions $f$, such as 
being increasing and/or concave.
More specifically, we present a new algorithm that can handle 
arbitrary constraints on the
(discrete) first- and second-order derivatives of $f$.

When we only require $f$ to be weakly increasing, the problem is known as 
isotonic regression, a classic problem in statistics;
(see, \eg, \cite{GroeneboomJongbloed2014} for history and applications).
It has more recently also found uses in 
machine learning~\cite{KalaiSastry2009,KakadeKanadeShamirKalai2011,GantiSastryBalzanoWillet2015}.

In certain applications, further shape restrictions are integral part of the model:
For example, microeconomic theory suggests that production functions are weakly increasing and concave
(modeling diminishing marginal returns);
similar reasoning applies to utility functions.
Restricting $f$ to functions with bounded derivative (Lipschitz-continuous functions)
is desirable to avoid overfitting~\cite{KakadeKanadeShamirKalai2011}.
All these shape restrictions can be expressed by inequalities for first and second derivatives of $f$;
their discretized equivalents are hence amenable to our new method.
Shape restrictions that we cannot directly handle are studied in~\cite{TsourakakisPengTsiarliMillerSchwartz2011} 
($f$ is piecewise constant and the number of breakpoints is to be minimized)
and~\cite{Stout2008} (unimodal $f$).
For a more comprehensive survey of shape-constrained function-fitting problems and their applications,
see~\cite[\S1]{GuntuboyinaSen2018}.
Motivated by these applications, the problems have been
studied in statistics (as a form of nonparametric regression), 
investigating, \eg, their consistency as estimators and their rate of convergence%
~\cite{GroeneboomJongbloed2014,GuntuboyinaSen2018,Balasz2016}.

While fast algorithms for isotonic-regression variants have been designed~\cite{Stout2014},
both~\cite{Lim2018} and~\cite{Bach2018} list shape constraints
beyond monotonicity as important challenges.
For example, fitting (multidimensional) convex functions is 
mostly done via quadratic or linear programming solvers~\cite{MazumderChoudhuryIyengarSen2018}.
In his PhD thesis, Balázs writes that current 
``methods are computationally too expensive for practical use, [so]
their analysis is used for the design of a heuristic 
training algorithm which is empirically evaluated''~\cite[p.\,1]{Balasz2016}.

This lack of efficient algorithms motivated the present work.
Despite a few limitations discussed below 
(implying that we do not yet solve Balázs' problem), 
we give the first \emph{near-linear-time} algorithms
for any function-fitting problem with second-order shape constraints (such as convexity).
We use dynamic programming (DP) with a novel geometric encoding for the ``states''.
Simpler versions of such geometric DP variants were used for isotonic regression~\cite{Rote2019}
and are well-known in the competitive programming community; 
incorporating second-order constraints efficiently is our main innovation.

\paragraph{Problem definition.}
Given the vectors $\vec x = (x_1,\ldots,x_n) \in \mathbb R^n$
and $\vec y \in \mathbb R^n$,
an error norm $d$ and shape constraints (formalized below),
compute $\vec f = (f_1,\ldots,f_n)$ satisfying the shape constraints 
with minimal $d(\vec f,\vec y)$,
\ie, we represent $f$ via its values $f_i = f(x_i)$ at the given points.
$d$ is usually an $L_p$ norm, 
$d(\vec x,\vec y) = \bigl(\sum_i |x_i-y_i|^p\bigr){}^{1/p}$;
least squares ($p=2$) dominate in statistics, but more general
error functions have been studied for isotonic regression~\cite{LussRosset2014,KyngRaoSachdeva2015,Lim2018,Bach2018}.
We will consider the \emph{weighted} $L_\infty$ norm, \ie,
\(
	d(\vec f, \vec y) = \max_{i\in[n]} w_i \mathopen| f_i - y_i \mathclose|
\),
where $[n] = \{1,\ldots,n\}$ and $\vec w \in \mathbb R_{\ge0}^n$ is a given vector of weights.

Since we are dealing with discretized functions (a vector $\vec f$),
restrictions for derivatives $f'$ and $f''$ have to be discretized, as well.
We define local slope and curvature as
\[
	f'_i \wrel= \frac{f_i - f_{i-1}}{x_i-x_{i-1}},\quad (i\in[2..n]),
	\quad\text{and}\quad
	f''_i \wrel= \frac{f'_{i}-f'_{i-1}}{x_i-x_{i-1}},\quad (i\in[3..n]);
\]
the shape constraints are then given in the form of vectors 
$\vec {f^{\prime-}}, \vec{f^{\prime+}}, \vec{f^{\prime\prime-}}, \vec{f^{\prime\prime+}}$
of bounds for the first- and second-order differences, \ie,
we define the set of feasible answers as
\(F = \bigl\{\vec f \in \mathbb R^n \bigm|
	\vec {f^{\prime-}} \le \vec{f'} \le \vec{f^{\prime+}} \bin\wedge 
	\vec {f^{\prime\prime-}} \le \vec{f''} \le \vec{f^{\prime\prime+}} 
\bigr\}\)
where inequalities on vectors mean the inequality on all components.
The \emph{weighted-$L_\infty$ function-fitting problem with second-order shape constraints} 
is then to find 
\begin{align}
\label{eq:main}
	\vec{f^*} \wwrel= \operatornamewithlimits{\arg\min}_{\vec f \in F} \left( \max_{i} \,w_i \cdot |f_i - y_i|\right).
\end{align}
Often, we only need a lower resp.\ upper bound; we can achieve that 
by allowing $-\infty$ and $+\infty$ entries in $f^{\prime\pm}_i$ and $f^{\prime\prime\pm}_i$.
For example, setting $\vec{f^{\prime\prime-}} = 0$, $\vec{f^{\prime-}} = \vec{f^{\prime\prime-}} = +\infty$
and $\vec{f^\prime-} = -\infty$, we can enforce a convex function/vector.
We also consider the decision-version of the problem:
given a bound $L$, decide if there is an $\vec f\in F$ with 
\(\max_{i} w_{i} \left|f_i - y_{i}\right|  \leq L\),
and if so, report one.

\paragraph{Contributions.}
Our main result is a \emph{single} $O(n)$-time algorithm for the decision problem
of function fitting with second-order constraints;
see Theorem~\ref{thm:main} for the precise statement. 
With binary search, this readily yields
an additive $\epsilon$-approximation for \eqref{eq:main}, 
and thus weighted $L_{\infty}$ isotonic regression, 
convex regression and Lipschitz convex regression, in 
$O\bigl( n \log \frac{U}{\epsilon} \bigr)$ time (Theorem~\ref{thm:main1}), 
where $U = (\max_i w_i)\cdot (\max_i y_i - \min_i y_i )$. 
\ifconf{%
	In the extended online version of this article (\href{https://arxiv.org/abs/1905.02149}{\texttt{arxiv.org/abs/1905.02149}}), 
	we give a simple greedy algorithm
}{%
	In the appendix, we give a simple greedy algorithm (see Theorem~\ref{thm:minor}) 
}%
for \emph{unweighted} ($\vec w=1$) $L_{\infty}$ convex regression that runs in $O(n)$ time.
Finally, we show that a generalization of the problem to DAGs 
(where the applied first- and second-order difference constraints are restricted by the graph),
is as hard as linear programming%
\ifconf{.}{, see Appendix~\ref{app:dags-hard}.}

\paragraph{Related work.}

Stout~\cite{Stout2014} surveys algorithms for various versions of isotonic regression;
they achieve near-linear or even linear time for many error metrics.
He also considers the generalization to any partial order 
(instead of the total order corresponding to weakly increasing functions).
A related task is to fit a piecewise-constant function (with a prescribed number
of jumps) to given data. \cite{FournierVigneron2009,FournierVigneron2013} solve
this problem for $L_\infty$ in optimal $O(n\log n)$ time.
Since the geometric constraints are much easier than in our case,
a simple greedy algorithm suffices to solve the decision version.

For more restricted shapes, less is known.
\cite{Stout2008} gives a $O(n \log n)$ solution for unimodal regression.
\cite{AgarwalPhillipsSadri2010}~gives an $O(n\log n)$ algorithm for unweighted 
$L_2$ Lipschitz isotonic regression
and a $O(n \operatorname{poly}(\log n))$ time algorithm for Lipschitz unimodal regression.
\cite{MazumderChoudhuryIyengarSen2018} describes (multidimensional) 
$L_2$ convex regression algorithms based quadratic programming.
Fefferman~\cite{Fefferman2012} studied a closely related problem of smooth
interpolation of data in Euclidean space minimizing a certain norm defined on the
derivatives of the function. 
His setup is much more general, but his algorithm cannot find arbitrarily 
good interpolations ($\epsilon$ is fixed for the algorithm).
All fast algorithms above consider classes defined
by constraints on the \emph{first} derivative only, 
not the second derivative as needed for convexity.
To our knowledge, the fastest prior solution for any convex regression problem
is solving a linear program, which will imply super-linear time.

We use a geometric interpretation of dynamic-programming states
and represent them implicitly.
The work closest in spirit to ours is a recent article by Rote~\cite{Rote2019};
establishing the transformation of states is much more complicated in the presently studied problem, though.
Implicitly representing a series of more complicated objects using
data structures has been used in geometric
and graph algorithms, such as multiple-source shortest paths~\cite{Klein2005}
and shortest paths in polygons~\cite{Chazelle1982,LeePreparata1984,Erickson2009}.
The only other work (we know of) that interprets dynamic programming
geometrically is~\cite{TsourakakisPengTsiarliMillerSchwartz2011}.

There is a rich literature on methods for speeding up dynamic
programming~\cite{Yao1980,Yao1982,EppsteinGalilGiancarlo1988,GalilGiancarlo1989}.
They involve a variety of powerful techniques
such as monotonicity of transition points,
quadrangle inequalities,
and Monge matrix searching~\cite{AggarwalKlaweMoranShorWilber1987},
many of which have found applications in other settings.
The focus of these methods is to reduce the (average)
number of transitions that a state is involved in,
often from $O(n)$ to $O(1)$.
Therefore, their running times are lower bounded by
the number of states in the dynamic programs.

\subsection{Results}
We formally state our theorem for the decision problem here;
results for shape-constrained function fitting are obtained as corollaries.
For our algorithm, the discrete derivatives (as defined above) are inconvenient
because they involve the $x$-distance between points.
We therefore \emph{normalize} all $x$-distances to $1$ (s.\,t.\ $x_i = i$);
for the second-order constraints, this normalization makes the introduction of 
an additional parameter necessary, the scaling factors $\alpha_i$ (see below).
\begin{definition}[1st/2nd-diff-constrained vectors]
\label{def:problem}
	Let $n$-dimensional vectors 
	$\vec x^- \leq \vec x^+$ (value bounds), 
	$\vec y^- \leq \vec y^+$ (difference bounds), 
	$\vec z^- \leq \vec z^+$ (second-order difference bounds), and 
	$\vec \alpha> 0$ be given. 
	We define $\solutionset \subset \mathbb R^n$ to be the set of all 
	$\vec b\in\mathbb R^n$ that satisfy the following constraints:
	\begin{align*}
			\forall i\in[1..n]
			\rel{} 
			x_i^- &\rel\leq b_i \rel\leq x_i^+
		&
			\text{(value constraints)}
	\\
			\forall i\in[2..n]
			\rel{} 
			y_i^- &\rel\leq b_i - b_{i-1} \rel\leq y_i^+
		&
			\text{(first-order constr.)}
	\\
			\forall i\in[3..n]
			\rel{} 
			z_i^- &\rel\leq (b_i - b_{i-1}) - \alpha_i(b_{i-1} - b_{i-2}) \rel\leq z_i^+
		&
			\text{(second-order constr.)}
	\end{align*}
	Moreover, we consider the ``truncated problems'' $\solutionsettrunc{k}$, 
	where $\solutionsettrunc{k}$ is the
	set of all $\vec b\in\mathbb R^n$ that satisfy the constraints up to $k$ (instead of $n$).
\end{definition} 
A visualization of an example is shown in Figure~\ref{fig:example}.
We can encode an instance 
$(\vec x, \vec y, \vec {f^{\prime\pm}}, \vec{f^{\prime\prime\pm}})$ 
of the decision version of the weighted-$L_\infty$
function-fitting problem with second-order constraints
as 1st/2nd-diff-constrained vectors by setting
\begin{align*}
x^{\pm}_i &\wrel= y_i \pm L / w_i, &
y^{\pm}_i &\wrel= f^{\prime\pm} \cdot (x_i-x_{i-1}), \\
z^\pm_i &\wrel= f^{\prime\prime\pm}\cdot (x_i-x_{i-1})^2, &
\alpha_i &\wrel= \frac{x_i-x_{i-1} } { x_{i-1}-x_{i-2} }.
\end{align*}
So, our goal is to efficiently compute some $\vec b \in \solutionset$ or determine 
that $\solutionset = \emptyset$.
Our core technical result is a linear-time algorithm for this task:

\begin{figure}[tbhp]
\plaincenter{%
\ifconf{\scalebox{1.25}}{\resizebox{\linewidth}!}{%
\includegraphics{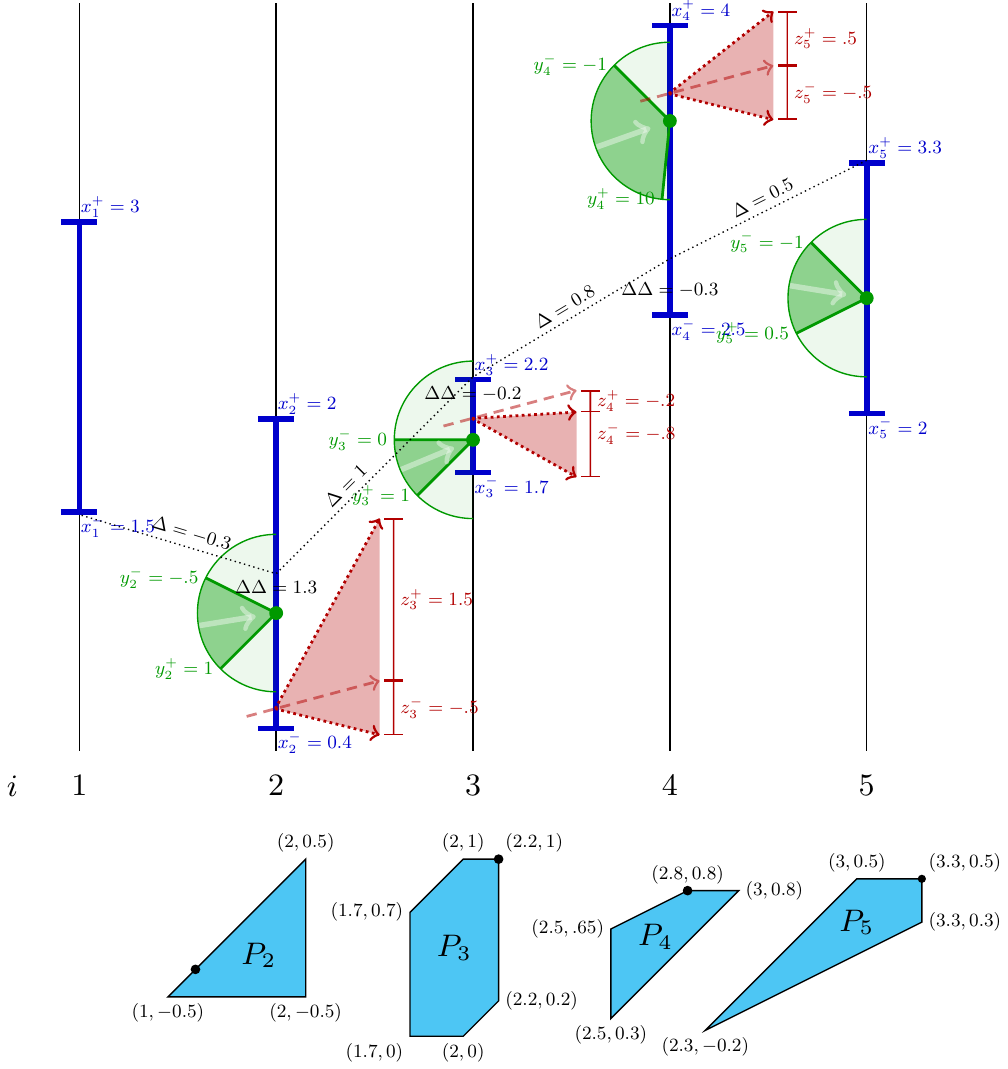}
}
}
\ifconf{\vspace*{-3ex}}{}
\caption{%
	Exemplary input for the 1st/2nd-diff-constrained decision problem
	with $\vec \alpha=1$.
	Value constraints are illustrated as blue bars. First-order constraints are shown
	as green circles, indicating the allowable incoming angles/slopes; the green dot and the circle
	can be moved up and down within the blue range.
	Finally, second-order constraints are given as red triangles, in which the minimal and maximal
	allowable change in slope is shown (dotted red), based off an exemplary incoming slope (dashed red).
	The thin dotted line shows $\vec b = (1.7,1.2,2.2,2.8,3.3) \in \solutionset$.\protect\\
	Below the visualization of the instance, we show the set of 
	pairs $(b_i,b_i-b_{i-1})$ for $\vec b\in \solutionset_i$, \ie, 
	solutions of the truncated problem; the specific solution is shown as a dot.
	These sets are the \emph{feasibility polygons} $P_i$ (defined in Section~\ref{sec:overview})
	that play a vital role in our algorithm.
	Given all $P_i$, one can easily construct a solution backwards, 
	starting from any point in $P_5$.%
}
\label{fig:example}
\end{figure}

\begin{theorem}[1st/2nd-diff-constrained decision]
\label{thm:main}
	With the notation of Definition~\ref{def:problem},
	in $O(n)$ time, we can compute $\vec b \in \solutionset$ 
	or determine that $\solutionset=\emptyset$.
\end{theorem}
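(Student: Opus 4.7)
The plan is to design a dynamic program in which the state carried from position $i-1$ to $i$ is the \emph{feasibility polygon}
\(P_i = \{(b_i, \delta_i) : \vec b \in \solutionsettrunc{i}\} \subseteq \mathbb{R}^2\),
where I write $\delta_j = b_j - b_{j-1}$ throughout; these are the objects hinted at in Figure~\ref{fig:example}. Two coordinates suffice because the step-$(i+1)$ constraints depend on the prefix of $\vec b$ only through $b_i$ and $\delta_i$. A straightforward induction shows that each $P_i$ is a (possibly empty) convex polygon: $P_2$ is a value-constraint rectangle intersected with a first-order strip, and $P_i$ is obtained from $P_{i-1}$ by operations preserving convexity and polygonality. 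Having $\solutionset \neq \emptyset$ is equivalent to $P_n \neq \emptyset$; given the sequence $P_2, \ldots, P_n$, we reconstruct a feasible $\vec b$ by picking any point in $P_n$ and walking backwards, where at each step the existence of a consistent preimage $(b_{i-1}, \delta_{i-1}) \in P_{i-1}$ of the current $(b_i, \delta_i)$ is guaranteed by the construction.

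The core transition $P_{i-1} \to P_i$ I would describe in three steps. First, lift $P_{i-1}$ into three dimensions by introducing $\delta_i$ constrained by $y_i^- \leq \delta_i \leq y_i^+$ and $z_i^- \leq \delta_i - \alpha_i \delta_{i-1} \leq z_i^+$. Second, project onto the plane $(b_i, \delta_i) = (b_{i-1} + \delta_i, \delta_i)$. Third, intersect with the value strip $x_i^- \leq b_i \leq x_i^+$. Each step adds only $O(1)$ new half-plane constraints. For linear total time, I would represent each $P_i$ implicitly by its upper and lower hulls, stored as deques. The update from $P_{i-1}$ to $P_i$ should then cost amortized $O(1)$: each of the $O(1)$ new half-planes may trim a contiguous run of edges from one end of a hull and add at most one new edge. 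A standard potential-function argument charges each edge removal to its earlier creation, giving an $O(n)$ forward pass. For the backtracking pass, witness pointers recorded during the forward pass let us resolve each preimage in $O(1)$ time, for another $O(n)$.

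The principal technical obstacle is handling the interaction between the second-order coupling (which multiplies $\delta_{i-1}$ by $\alpha_i$ and combines it with $\delta_i$) and the projection onto $b_i$: in naive $(b, \delta)$ coordinates, the combined transformation could re-sort hull edges and invalidate the amortized bound. I expect to resolve this by choosing a coordinate system in which the per-step transformation acts only at the two ends of each deque, and by crafting a potential function that accounts simultaneously for the new edges introduced by constraint intersection and the shear-like effect of the projection. Once this amortization is in place, the $O(n)$ bound claimed by the theorem follows.
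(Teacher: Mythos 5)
Your overall architecture coincides with the paper's: the same feasibility polygons $P_i$, the same deque-of-hull-vertices representation, and the same amortized accounting that charges hull-vertex deletions to their creations. But the proposal stops exactly at the step that carries the paper's technical weight, and explicitly defers it (``I expect to resolve this by choosing a coordinate system\dots''). The missing idea is an inductive \emph{slope invariant}: every $P_i$ is $\polytope$, i.e.\ every edge of $P_i$ has weakly nonnegative (possibly infinite) slope. This single invariant is what makes both halves of your transition work. For the second-order step, the relevant map is not one linear map of the plane: each point $(x,y)$ of the $\alpha_i$-scaled polygon is sent to the \emph{segment} $\{(x+y+z,\,y+z): z\in[z_i^-,z_i^+]\}$, so $\sndOrdOnlyP$ is a Minkowski sum with a slope-$1$ segment, and one must prove that its upper (resp.\ lower) hull is exactly the image of the old upper (resp.\ lower) hull under the endpoint map $f_i^+$ (resp.\ $f_i^-$), plus one new corner vertex on each hull. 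That is Lemma~\ref{lem:polytope_shift}, and its proof hinges on the identity $\mathrm{slope}(f_i^*(v_j),f_i^*(v_k))^{-1}=\mathrm{slope}(v_j,v_k)^{-1}+1$: nonnegative slopes map monotonically into $[0,1]$, so no hull edge is re-sorted, every old vertex remains a vertex, and the slope-$1$ connecting segments stay inside the new hulls. Without the invariant this fails --- a negatively sloped edge can flip orientation under the shear, which is precisely the ``re-sorting'' you identify as the obstacle but do not overcome.

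The invariant is equally necessary for your cutting step: the first-order constraints are \emph{horizontal} cuts $y_i^-\le \delta_i\le y_i^+$, and a horizontal line removes only a contiguous prefix or suffix of a hull only if the $y$-coordinates are monotone along it --- which is again exactly the nonnegative-slope property. One smaller correction: the per-step transformation is cheap not because it ``acts only at the two ends of each deque''; it acts on every vertex, but it is the \emph{same} affine map for all vertices of a given hull, so it can be stored as a composed matrix and applied lazily in $O(1)$ per access. Only the structural changes (two new corner vertices, plus at most one vertex per cut, as in Lemma~\ref{lem:invariant}) are localized at the ends. With the $\polytope$ invariant established and shown to be preserved by scaling, shearing, and axis-parallel cuts, your potential argument and the $O(n)$ bound go through; as written, the proposal has a genuine gap at its central step.
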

Section~\ref{sec:dyn_prog} will be devoted to the proof.
To simplify the presentation, 
we will assume throughout that 
$\vec{x^+}$, $\vec{x^-}$, $\vec{y^+}$, $\vec{y^-}$, $\vec{z^+}$, $\vec{z^-}$
are bounded.%
\footnote{%
	Some problems are stated with $\pm\infty$ values, but 
	we can always replace unbounded values in the algorithms
	with an (input-specific) sufficiently large finite number.
}
For the optimization version of the problem, Equation~\eqref{eq:main},
we consider approximate solutions in the following sense.
\begin{definition}[$\epsilon$-approximation]
\label{defn:approx}
We call $\vec f\in F$ an
$\epsilon$-approximate solution to the weighted $L_{\infty}$ function-fitting problem if
it satisfies
\[  
		\max_{i} w_{i}\left|f_i -y_{i}\right|  
	\wwrel\leq 
		\min_{\vec g\in F} \left(
			\max_{i}w_{i}\left|g_i-y_{i}\right| 
		\right) 
		+ \epsilon.
\]
\end{definition}
By a simple binary search on $L$, we can find approximate solutions.
\begin{theorem}[Main result]
\label{thm:main1}
There exists an algorithm that computes an $\epsilon$-approximate solution to the 
weighted-$L_\infty$ convex regression problem that runs in $O(n\log \frac{U}{\epsilon})$ time,
where $U = (\max_i w_i)(\max_i y_i - \min_i y_i ).$ The same holds true for isotonic
regression, Lipschitz isotonic regression, convex isotonic regression.
\end{theorem}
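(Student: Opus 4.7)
The plan is a textbook reduction from optimization to decision by parametric search on the error bound, using Theorem~\ref{thm:main} as the black-box feasibility oracle. Given a candidate value $L$, the excerpt already exhibits the encoding that turns the weighted-$L_\infty$ fitting problem into a 1st/2nd-diff-constrained decision instance: set $x_i^\pm = y_i \pm L/w_i$ for the value constraints, and translate the shape constraints on $\vec{f'}$ and $\vec{f''}$ into $\vec{y^\pm}$, $\vec{z^\pm}$, $\vec\alpha$ as shown. Write $L^*$ for the optimum of \eqref{eq:main}. The key observation is that the induced instance is \emph{monotone in $L$}: only the value constraints $x_i^\pm$ depend on $L$, and larger $L$ relaxes them while leaving the first- and second-order bounds untouched. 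Hence the induced $\solutionset$ is nonempty iff $L \ge L^*$, which is precisely what is needed for binary search.

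Next, I would bound $L^*$ by $U$. For every shape class listed in the statement (isotonic, Lipschitz isotonic, convex, convex isotonic) the constant function $f_i \equiv (\max_j y_j + \min_j y_j)/2$ lies in the feasible set $F$, since constants have zero first and second differences and therefore satisfy any one-sided bounds that include $0$. For this $\vec f$ the error is at most $(\max_i w_i)\cdot(\max_j y_j - \min_j y_j)/2 \le U/2$, so $L^* \in [0,U]$.

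I would then run binary search on $[0,U]$, maintaining an interval $[\ell,r]$ with $\ell \le L^* \le r$ together with a witness $\vec f$ of feasibility at level $r$. At each step I query Theorem~\ref{thm:main} on $L=(\ell+r)/2$ in time $O(n)$; if it returns a witness I update $r \leftarrow L$ and replace the stored witness, otherwise I set $\ell \leftarrow L$. After $\lceil \log_2(U/\epsilon)\rceil$ iterations I have $r-\ell \le \epsilon$, and the stored witness $\vec f$ satisfies
\[
    \max_i w_i|f_i - y_i| \wrel\le r \wrel\le L^* + \epsilon,
\]
so $\vec f$ is an $\epsilon$-approximate solution in the sense of Definition~\ref{defn:approx}. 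The total running time is $O\bigl(n\log(U/\epsilon)\bigr)$.

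There is no serious obstacle beyond Theorem~\ref{thm:main} itself; the only items to verify carefully are (i) monotonicity of feasibility in $L$ (immediate from the encoding, since only the value window widens) and (ii) the initial bound $L^* \le U$, which relies on the constant function being admissible for each listed shape class. Both are straightforward, so the proof reduces to assembling the binary-search driver on top of the linear-time decision procedure.
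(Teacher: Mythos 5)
Your proposal is correct and follows essentially the same route as the paper: binary search over $L$ using the linear-time decision procedure of Theorem~\ref{thm:main}, with the initial upper bound $L_0\le U$ certified by a constant (hence feasible) function. The only cosmetic differences are your choice of the constant $(\max_j y_j+\min_j y_j)/2$ instead of the paper's $\min_j y_j$, and your explicit check of monotonicity in $L$ and of feasibility of constants for each shape class, which the paper leaves implicit.
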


\FullVersion{
\begin{proof}
We will argue for the case of convex regression here, other cases are similar.
Abbreviate $L(\vec f) = \max_i w_i | f_i - y_i|$.
For a given $L$, the decision version of convex
regression can be solved in $O(n)$ time using Theorem~\ref{thm:main}.  That is, in
$O(n)$ time, we can either find $\vec f \in F$ such that $L(\vec f) \leq  L$
or conclude that for all $\vec f \in F,$ $L(\vec f) >  L.$  If we know an
$L_0$ for which there exists $\vec f \in F$ with $L(\vec f) \leq  L_0$,
then we can do a binary search for $L_c$ in $[0,L_0].$  
We can easily find such an $L_0$ for the convex case:
Let $\vec f = \min y_j$ be constant (hence convex). 
For this $\vec f$, we have 
$L(\vec f) \leq (\max_j w_j)(\max_j y_j - \min_j y_j).$ Therefore, we can take
$L_0 = (\max_j w_j)(\max_j y_j - \min_j y_j)$ and the result immediately follows.
\qed\end{proof}
}
We note that for the specific case of \emph{unweighted} convex function fitting,
there is a simpler linear-time greedy algorithm; 
we give more details on that in 
\ifconf{%
	the extended online version.
}{%
	Appendix~\ref{app:greedy-proof}.
}%
This algorithm was the initial motivation for studying this problem and for 
the geometric approach we use.
For more general settings, in particular second-order differences that are allowed to be
both positive and negative, the greedy approach does not work; 
our generic algorithm, however, is almost as simple and efficient.

\section{First- and second-order difference-constrained vectors}
\label{sec:dyn_prog}

In this section, we present our main algorithm and prove Theorem~\ref{thm:main}. 
In Section~\ref{sec:overview}, we give an overview and introduce the feasibility polygons $P_i$.
Section~\ref{subsec:transform} shows how $P_i$ can be inductively computed from $P_{i-1}$
via a geometric transformation.
We finally show how this transformation can be computed efficiently,
culminating in the proof of Theorem~\ref{thm:main}, in Section~\ref{sec:algorithm}.
Two proofs are deferred to 
\ifconf{%
	the extended online version of this article.
}{%
	Appendix~\ref{app:polytope-shift} and~\ref{app:backtracing}.
}

\subsection{Overview of the algorithm}
\label{sec:overview}

Recall that the problem we want to solve, in order to prove Theorem~\ref{thm:main}, is
finding a feasible point $\vec b$ in $\solutionset$ from Definition~\ref{def:problem}. 
Our algorithm will use dynamic programming (DP) where each state is associated with the
feasible $b_i$ in the truncated problem. 
We will iteratively determine all
$b_i$ such that $b_i$ is the $i$th entry of some $\vec b \in \solutionsettrunc{i}$. 

Feasible $b_i$ have to respect the first- and second-order difference constraints.
To check those, we also need to know
the possible pairs $(b_{i-1},b_{i-2})$ of $(i-1)$th
and $(i-2)$th entries for some $\vec b \in \solutionsettrunc{i-1}$,
so the states have to maintain more information than the $b_i$ alone.
It will be instrumental to \emph{rewrite} this pair as $(b_{i-1},b_{i-1} - b_{i-2})$, 
the combination of valid values $b_{i-1}$ and valid \emph{slopes} 
at which we entered $b_{i-1}$
for a solution in $\solutionsettrunc{i-1}$.
From that, we can determine the valid slopes at which we can \emph{leave} $b_{i-1}$ using our
shape constraints. 
We thus define the \emph{feasibility polygons}
\begin{align}
	\label{def:each_polytope}
		P_i 
	\wwrel= 
		\bigl\{(x,y) \bigm| \exists \vec b \in \solutionsettrunc{i} : 
			x = b_i \wedge y = b_i - b_{i-1} \bigr\}
\end{align}
for $i=2,\ldots,n$.
See Figure~\ref{fig:example} for an example.
We view each point in $P_i$ as a ``state'' in our DP algorithm, and our goal becomes
to efficiently compute $P_i$ from $P_{i-1}$. 
The key observation is that each $P_i$ is indeed an $O(n)$-vertex
convex polygon,
and we only need an efficient way to compute the \emph{vertices} of $P_{i}$ from those of $P_{i-1}$.
This needs a clever representation, though, 
since all vertices can change when going from $P_{i-1}$ to $P_i$.
A closer look reveals that we can represent the
vertex transformations \emph{implicitly}, without actually updating each vertex,
and we can combine subsequent transformations into a single one.
More specifically, if we consider the boundary of $P_{i-1}$, 
the transformation to $P_i$ consists of two steps: 
(1) a linear transformation for the upper and lower hull of
$P_{i-1}$, and (2) a truncation of the resulting polygon by vertical and horizontal lines 
(\ie, an intersection of the polygon and a half-plane).

The first step requires a more involved proof and uses that all line segments of $P_i$
have weakly positive slope (``$\polytope$'', formally defined below).
Implicitly computing the first transformation as we move between $P_i$ is
straightforward, only requiring a composition of linear operations 
(a different one, though, for upper and lower hull).
We can apply the cumulative transformation whenever we need
to access a vertex.

The second step is conceptually simpler, but more difficult to implement efficiently, 
as we have to determine where a line cuts the polygon in amortized constant time.
For this operation, we separately store the vertices of the upper and lower hull of $P_i$ 
in two arrays, sorted by increasing $x$-coordinate;
since $P_i$ is $\polytope$, $y$-values are also increasing.
A linear search for intersections has overall $O(n)$ cost 
since we can charge individual searches to deleted vertices.

Finally, if $P_n \ne \emptyset$, 
we compute a feasible vector $\vec b$ backwards, 
starting from any point in $P_n$. 
Since we do not explicitly store the $P_i$, this requires
successively ``undoing'' all operations (going from $P_i$ back to $P_{i-1}$);
see 
\ifconf{the extended version of this paper}{Appendix~\ref{app:backtracing}}
for details.

\subsection{Transformation from state $P_{i-1}$ to $P_i$}\label{subsec:transform}

We first define the structural property ``$\polytope$'' that
our method relies on.
\begin{definition}[$\polytope$]
\label{def:our_polytope}
	We say a polygon $P\subseteq\mathbb R^2$ with vertices
	$v_1,\ldots,v_k$ is $\polytope$ if $\mathrm{slope}(v_i,v_j) \geq 0$
	for all edges $(v_i,v_j)$ of $P$.
	Here, the slope between two points $v_1 = (x_1,y_1)$, $v_2 = (x_2,y_2) \in \mathbb{R}^2$ 
	is defined as $\mathrm{slope}(v_1,v_2) = \frac{y_2 - y_1}{x_2 - x_1}$, when $x_1 \ne x_2$,
	and $\mathrm{slope}(v_1,v_2) = \infty$, otherwise.
\end{definition}
We will now show that $P_i$ can be computed by applying
a simple geometric transformation to $P_{i-1}$.
In passing, we will prove (by induction on $i$) that all $P_i$ are $\polytope$.
For the base case, note that 
\(
	P_2 = \{(b_2,b_2-b_1) \mid 
			x_{1}^- \le b_1 \le x_{1}^+ \wedge
			x_{2}^- \le b_2 \le x_{2}^+ \wedge 
			y_{2}^- \le b_2-b_1 \le y_{2}^+
		\}
\), which 
is an intersection of $6$ half-planes. The slopes of the defining inequalities 
are all non-negative or infinite, so $P_2$ is $\polytope$.

Let us now assume that $P_{i-1}$, $i\ge3$, is $\polytope$; 
we will consider the transformation
from $P_{i-1}$ to $P_i$ and show that it preserves this property.
We begin by separating the transformation from $P_{i-1}$ to $P_i$ 
into two main steps.

\paragraph{Step 1: Second-order constraint only.}
\label{sec:step1}
For the first step, we ignore the value and first-order constraints
at index $i$.
This will yield a convex polygon, $\sndOrdOnlyP $, that contains $P_i$;
in \hyperref[sec:step2]{Step 2}{}, we will add the other constraints at $i$ to obtain $P_i$ itself.
\begin{definition}[$\sndOrdOnlyP $: 2nd-order-only polygons]
\label{def:polytope_with_no_x_or_y}
	For a fixed $i$, consider the modified problem with $x_i^-,y_i^- = -\infty$ and
	$x_i^+,y_i^+ = \infty$. Define the second-order-only polygon, $\sndOrdOnlyP $, as the polygon $P_i$ of this modified
	problem (considering only the $z_i$ constraints at $i$).
\end{definition}
The statement of the following lemma is very simple observation, 
but allows us to compute $\sndOrdOnlyP$ from $P_{i-1}$ with an explicit 
geometric construction, (whereas such seemed not obvious 
for the original feasibility polygons).
\begin{figure}[tbh]
\plaincenter{
\ifconf{\scalebox{1.1}}{\resizebox{\linewidth}!}{%
\includegraphics{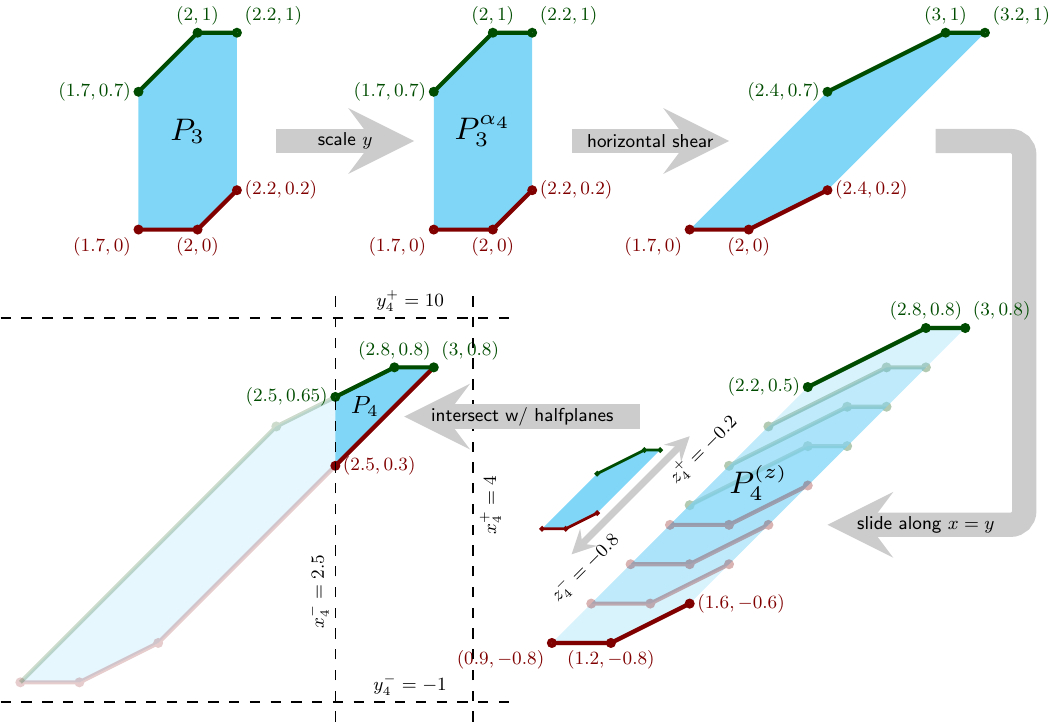}
}
}
\caption{%
	The transformation from $P_3$ to $P_4$ for the example instance of Figure~\ref{fig:example}.
	Upper and lower hull are shown separately in green resp.\ red.
}
\label{fig:P3-to-P4}
\end{figure}

\begin{lemma}[$\sndOrdOnlyP $: scaled, sheared and shifted $P_{i-1}$]
\label{lem:stretch}~\\
	$\sndOrdOnlyP  = \bigl\{
		(x+\alpha_i y + z, \alpha_i y + z) \mid (x,y) \in P_{i-1}, z \in [z_i^-,z_i^+] 
		\bigr\}$.
\end{lemma}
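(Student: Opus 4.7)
The plan is to prove the set equality by direct unfolding of both sides, showing the two inclusions via a simple change of variables $z := (b_i - b_{i-1}) - \alpha_i(b_{i-1} - b_{i-2})$. The only real content is that the truncated prefix of a feasible solution through index $i$ (under the modified constraints at $i$) is itself feasible through index $i{-}1$, which follows because the modified constraints relax only the index-$i$ value and first-order constraints.

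First I would unfold $\sndOrdOnlyP$ using Definitions~\ref{def:problem} and~\ref{def:polytope_with_no_x_or_y}: a point $(u,v)$ belongs to $\sndOrdOnlyP$ if and only if there exists $\vec b \in \mathbb R^i$ satisfying all constraints up through index $i{-}1$, plus \emph{only} the second-order constraint $z_i^- \le (b_i - b_{i-1}) - \alpha_i(b_{i-1} - b_{i-2}) \le z_i^+$ at index $i$, with $u = b_i$ and $v = b_i - b_{i-1}$. The prefix $(b_1,\ldots,b_{i-1})$ lies in $\solutionsettrunc{i-1}$, so setting $x := b_{i-1}$ and $y := b_{i-1} - b_{i-2}$ gives $(x,y) \in P_{i-1}$ by definition~\eqref{def:each_polytope}.

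Next I would perform the change of variables: define $z := (b_i - b_{i-1}) - \alpha_i y$, which by the second-order constraint lies in $[z_i^-, z_i^+]$. Then $b_i - b_{i-1} = \alpha_i y + z$ and $b_i = x + \alpha_i y + z$, giving
\[
  (u, v) \wrel= (b_i,\, b_i - b_{i-1}) \wrel= (x + \alpha_i y + z,\ \alpha_i y + z),
\]
which proves the inclusion $\sndOrdOnlyP \subseteq \{(x + \alpha_i y + z, \alpha_i y + z) \mid (x,y)\in P_{i-1},\ z \in [z_i^-, z_i^+]\}$.

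For the reverse inclusion, given any $(x,y) \in P_{i-1}$ and $z \in [z_i^-, z_i^+]$, pick a witness $\vec b \in \solutionsettrunc{i-1}$ with $b_{i-1} = x$ and $b_{i-1} - b_{i-2} = y$, and extend it by $b_i := x + \alpha_i y + z$. The extended vector satisfies all constraints through index $i{-}1$ by the choice of $\vec b$, and at index $i$ the second-order expression evaluates to $(b_i - b_{i-1}) - \alpha_i y = z \in [z_i^-, z_i^+]$; the value and first-order constraints at $i$ are vacuous under the modification of Definition~\ref{def:polytope_with_no_x_or_y}. Hence $(b_i, b_i - b_{i-1}) = (x + \alpha_i y + z, \alpha_i y + z) \in \sndOrdOnlyP$. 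There is no real obstacle here beyond being careful with the definitions; the lemma is essentially a bookkeeping identity, which is precisely why the authors highlight that the geometric construction becomes explicit once the first-order and value constraints at index $i$ are dropped.
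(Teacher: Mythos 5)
Your proof is correct and follows essentially the same route as the paper's: the paper also observes that the only constraint at index $i$ is the second-order one and rewrites it, via the substitution $z = (b_i - b_{i-1}) - \alpha_i(b_{i-1}-b_{i-2})$, as constraints on $b_i - b_{i-1}$ and $b_i$ in terms of the coordinates of $P_{i-1}$. Your version merely spells out both inclusions explicitly where the paper compresses them into two sentences.
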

\begin{proof}
	The only constraint at $i$ is 
	$z_i^- \leq (b_i - b_{i-1}) - \alpha_i(b_{i-1} - b_{i-2}) \leq z_i^+$.
	We rewrite this as (a) a constraint
	for $b_i - b_{i-1}$, using that  $b_{i-1} - b_{i-2}$ is the $y$-coordinate in $P_{i-1}$, 
	and (b) a constraint for $b_i$, using 
	that, additionally, $b_{i-1}$ is the $x$-coordinate in $P_{i-1}$.
\qed\end{proof}
Once we have computed this polygon $\sndOrdOnlyP $, computing $P_i$ is easy:
adding the constraints $x_i^- \leq x \leq x_i^+$ and  $y_i^- \leq y \leq y_i^+$ 
requires only cutting $\sndOrdOnlyP $ with two horizontal and vertical lines.
We give a visual representation of the mapping on an example in Figure~\ref{fig:P3-to-P4}.
We break the above mapping into two simpler stages:
\begin{corollary}[$\sndOrdOnlyP $: sheared and shifted $P_{i-1}^{\alpha_i}$]
\label{cor:scaling}%
~\\
	Setting $P_{i-1}^{\alpha_i} = \{(x, \alpha_i y) \mid (x,y) \in P_{i-1}\}$, we have\\
	$\sndOrdOnlyP  = \bigl\{(x+y + z,y + z) \mid (x,y) \in P_{i-1}^{\alpha_i}, z \in [z_i^-,z_i^+] \bigr\}$.
\end{corollary}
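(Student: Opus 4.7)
The plan is to derive the corollary as a direct consequence of Lemma~\ref{lem:stretch} by a change of variables that factors the single transformation there into two conceptually separate stages: (i) a rescaling of the $y$-coordinate by $\alpha_i$ producing $P_{i-1}^{\alpha_i}$, and (ii) the shear-plus-shift that sends $(x,y')$ to $(x+y'+z, y'+z)$ for $z \in [z_i^-, z_i^+]$.

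Concretely, I would start from the characterization
\[
    \sndOrdOnlyP \wwrel= \bigl\{(x+\alpha_i y + z,\, \alpha_i y + z) \bigm| (x,y) \in P_{i-1},\ z \in [z_i^-,z_i^+]\bigr\}
\]
given by Lemma~\ref{lem:stretch}, and introduce the substitution $y' := \alpha_i y$. The map $(x,y) \mapsto (x, \alpha_i y)$ is a bijection between $P_{i-1}$ and $P_{i-1}^{\alpha_i}$ (this is the definition of $P_{i-1}^{\alpha_i}$, and the bijection is well-defined since $\alpha_i > 0$ by hypothesis). Hence ranging $(x,y)$ over $P_{i-1}$ is equivalent to ranging $(x,y')$ over $P_{i-1}^{\alpha_i}$, and the expressions $x + \alpha_i y + z$ and $\alpha_i y + z$ become exactly $x + y' + z$ and $y' + z$. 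Substituting yields precisely the claimed description of $\sndOrdOnlyP$.

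There is no real obstacle here; the content of the corollary is algebraic bookkeeping that isolates the two geometric operations used later in Section~\ref{sec:algorithm} to represent the transformation $P_{i-1} \to P_i$ implicitly. The only thing worth noting in a careful write-up is that $\alpha_i > 0$ is needed for the bijectivity of the rescaling step, so that the existential quantifier over $(x,y) \in P_{i-1}$ in Lemma~\ref{lem:stretch} can be re-expressed as an existential quantifier over $(x,y') \in P_{i-1}^{\alpha_i}$ without losing or gaining any points.
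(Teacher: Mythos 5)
Your proposal is correct and matches the paper's (implicit) argument: the corollary is stated without proof precisely because it is the change of variables $y' = \alpha_i y$ applied to Lemma~\ref{lem:stretch}, which is exactly what you do. The only quibble is that the set identity needs only that $P_{i-1}^{\alpha_i}$ is by definition the image of $P_{i-1}$ under $(x,y)\mapsto(x,\alpha_i y)$, so injectivity (and hence the hypothesis $\alpha_i>0$) is not actually required for the substitution to be valid.
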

We note that scaling the $y$-coordinate by $\alpha_i$ preserves the $\polytope$-property:
\begin{lemma}%
\label{lem:scaling}
	Let $\alpha \ge 0$.
	If $P$ is $\polytope$, so is $P^\alpha = \{(x,\alpha y)\mid (x,y \in P) \}$.
\end{lemma}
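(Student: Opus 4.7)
The plan is essentially a one-line observation made rigorous: the map $\phi_\alpha\colon (x,y)\mapsto(x,\alpha y)$ is an affine (linear) map, so it sends the convex polygon $P$ to the convex polygon $P^\alpha$, and it sends the line segment between any two points $v = (x_1,y_1)$ and $v' = (x_2,y_2)$ to the segment between $\phi_\alpha(v)$ and $\phi_\alpha(v')$. Hence edges map to edges, and to check the $\polytope$-property for $P^\alpha$ it suffices to compute the slope of $\phi_\alpha(v),\phi_\alpha(v')$ for an arbitrary edge $(v,v')$ of $P$.

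First I would handle the generic case $x_1 \ne x_2$: then
\[
    \mathrm{slope}(\phi_\alpha(v),\phi_\alpha(v'))
    \wrel=
    \frac{\alpha y_2 - \alpha y_1}{x_2 - x_1}
    \wrel=
    \alpha\cdot\mathrm{slope}(v,v')
    \wrel\ge 0,
\]
using $\alpha\ge0$ together with the inductive assumption $\mathrm{slope}(v,v')\ge0$ from $P$ being $\polytope$. Next I would dispose of the vertical edges ($x_1 = x_2$): for $\alpha > 0$ the image is again a vertical segment (slope $\infty\ge0$ by Definition~\ref{def:our_polytope}), while for $\alpha = 0$ the two endpoints collapse to the same point and the edge disappears, so there is nothing to verify.

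The only mild subtlety is the degenerate case $\alpha = 0$, where $P^\alpha$ collapses onto the $x$-axis and is only a polygon in a degenerate sense (a horizontal segment or a point). In that case every surviving edge is horizontal and therefore has slope $0\ge0$, so the property still holds trivially; alternatively, one can simply exclude $\alpha=0$ from this lemma since $\alpha_i>0$ is assumed in Definition~\ref{def:problem} (recalling that our invocation comes from Corollary~\ref{cor:scaling}). No further machinery is needed, and there is no real obstacle beyond keeping track of these boundary cases.
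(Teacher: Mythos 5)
Your proof is correct and follows essentially the same one-line argument as the paper, which simply notes that scaling the $y$-coordinates preserves the vertices and multiplies each edge slope by $\alpha\ge0$. Your extra care with vertical edges and the degenerate case $\alpha=0$ is a reasonable refinement of the same idea, not a different approach.
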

\begin{proof}
	Scaling the $y$-coordinates will preserve all of the vertices of $P$, and also scale the
	slope of each vertex pair by $\alpha\ge 0$. So, $P^\alpha$ is $\polytope$.
\qed\end{proof}
That leaves us with the core of the transformation, from $P_{i-1}^{\alpha_i}$ to
$\sndOrdOnlyP $. Intuitively, it can be viewed as sliding $P_{i-1}^{\alpha_i}$
along the line $x=y$ by any amount $z \in [z_i^-, z_i^+]$ and taking the union thereof,
(see Figure~\ref{fig:P3-to-P4}).
To compute the result of this operation, we split the boundary into upper and lower hull.
\begin{definition}[Upper/lower hull]
\label{def:upper}
	Let $P$ be a convex polygon with vertex set $V$. 
	We define the upper hull (vertices) resp.\ lower hull (vertices) of $P$ as 
	\begin{align*}
		\upperhull(P) &\wrel= \bigl\{u_i = (x_i,y_i) \in V \bigm|  \nexists (x_i,y) \in P : y > y_i \bigr\}
	\\
		\lowerhull(P) &\wrel= \bigl\{u_i = (x_i,y_i) \in V \bigm|  \nexists (x_i,y) \in P : y < y_i \bigr\}
	\end{align*}
	Unless specified otherwise, hull vertices are ordered by increasing $x$-coordinate.
\end{definition}
Note that a vertex can be in both hulls. 
Moreover, the leftmost vertices in $\upperhull(P)$ and $\lowerhull(P)$ always have the same $x$-coordinate,
similarly for the rightmost vertices.
As proved in Lemma~\ref{lem:stretch}, each point in $P_{i-1}^{\alpha_i}$ is mapped to a
line-segment with slope $1$; we give this mapping a name.
\begin{definition}[2nd-order $P$ transform]
	Let $f_i((x,y))$ be the line-segment $\{(x + y + z, y + z)\mid  z \in [z_i^-,z_i^+]\}$ and 
	denote by 
	$f_i^-((x,y)) = (x+y+z_i^-,y+z_i^-)$ and 
	$f_i^+((x,y)) = (x+y+z_i^+,y+z_i^+)$ 
	the two endpoints of $f_i((x,y))$.

	We write $f(S) = \Union_{(x,y)\in S} f((x,y))$ for the element-wise application of $f$
	to a set $S$ of points.
\end{definition} 
The vertices of $\sndOrdOnlyP$ result from transforming the upper hull of $P_{i-1}^{\alpha_i}$ 
by $f_i^+$ and the lower hull by $f_i^-$.
The next lemma formally establishes that applying
$f_i^+$ resp.\ $f_i^-$ to the hulls of $P_{i-1}^{\alpha_i}$
correctly computes $\sndOrdOnlyP $, (again, compare Figure~\ref{fig:P3-to-P4}).%
\begin{lemma}[From $P_{i-1}^{\alpha_i}$ to $\sndOrdOnlyP $ via hulls]
\label{lem:polytope_shift}%
	If $P_{i-1}^{\alpha_i}$ is $\polytope$, then $\sndOrdOnlyP $ is $\polytope$ and
	$\upperhull(\sndOrdOnlyP ) = \{ f_i^-(v_{\mathit{ll}}) \} \cup 
	f_{i}^+(\upperhull(P_{i-1}^{\alpha_i}))$ and
	$\lowerhull(\sndOrdOnlyP ) = f_{i}^{-}(\lowerhull(P_{i-1})) \cup \{ f_i^{+}(v_{\mathit{ur}})\}$,
	where $v_{\mathit{ll}}$ (\underline lower-\underline left) and 
	$v_{\mathit{ur}}$ (\underline upper-\underline right)
	are the first vertex of $\lowerhull(P_{i-1}^{\alpha_i})$ 
	and
	the last vertex of $\upperhull(P_{i-1}^{\alpha_i})$, respectively.
\end{lemma}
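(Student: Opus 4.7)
My plan rests on recognizing that $\sndOrdOnlyP$ admits a clean Minkowski-sum representation: writing $g(x,y) = (x+y,y)$ for the shear and $T = \{(z,z) : z \in [z_i^-, z_i^+]\}$ for the slope-$1$ segment, we have $\sndOrdOnlyP = g(P_{i-1}^{\alpha_i}) \oplus T$. First I would verify that $g$ maps each edge of slope $s \in [0,\infty]$ in $P_{i-1}^{\alpha_i}$ to one of slope $s/(1+s) \in [0,1]$; in particular, $g(P_{i-1}^{\alpha_i})$ is $\polytope$ with all slopes in $[0,1]$. Since the Minkowski sum with a slope-$1$ segment produces a polygon whose edge set is obtained by merging the two edge sequences in angular order, and all resulting slopes lie in $[0,1] \subseteq [0,\infty)$, $\sndOrdOnlyP$ is also $\polytope$.

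For the hull characterization, I would invoke the standard structural decomposition of $A \oplus [a,b]$: each edge of $A$ with outward normal $n$ satisfying $n \cdot (b-a) > 0$ appears in the boundary of $A \oplus [a,b]$ translated by $b$; those with $n \cdot (b-a) < 0$ appear translated by $a$; and two extra connecting edges parallel to $b-a$ glue the two halves at the vertices of $A$ whose normal cones straddle the hyperplane $\{n : n \cdot (b-a) = 0\}$ (these turn out to be the leftmost and rightmost vertices of $g(P_{i-1}^{\alpha_i})$). Because all slopes of $g(P_{i-1}^{\alpha_i})$ lie in $[0,1]$, every upper-hull edge has outward normal $(-s,1)$ with $1 - s \ge 0$ and every lower-hull edge has outward normal $(s,-1)$ with $s - 1 \le 0$. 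Thus the entire upper hull of $\sndOrdOnlyP$ is contributed by the $+b$-translate $f_i^+(P_{i-1}^{\alpha_i})$ and the entire lower hull by the $+a$-translate $f_i^-(P_{i-1}^{\alpha_i})$.

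It remains to identify the endpoints and the connecting edges. A short computation shows that minimizing the $x$-coordinate on $g(P_{i-1}^{\alpha_i})$ is equivalent to minimizing $x+y$ on $P_{i-1}^{\alpha_i}$, and one checks that for a $\polytope$ polygon the direction $(-1,-1)$ always lies in the normal cone at $v_{\mathit{ll}}$, so this minimum is attained exactly there; hence the leftmost vertex of $\sndOrdOnlyP$ is $f_i^-(v_{\mathit{ll}})$ and symmetrically the rightmost is $f_i^+(v_{\mathit{ur}})$. The left connecting segment of the Minkowski sum runs in the $(1,1)$ direction from $f_i^-(v_{\mathit{ll}})$ to $f_i^+(v_{\mathit{ll}})$; however, when $P_{i-1}^{\alpha_i}$ has a vertical left edge (i.e., $v_{\mathit{ll}} \ne u_1$), the shear turns that edge into a slope-$1$ edge of $g(P_{i-1}^{\alpha_i})$, which is parallel to $T$ and therefore merges with the connecting segment into a single longer slope-$1$ edge running from $f_i^-(v_{\mathit{ll}})$ all the way to $f_i^+(u_1)$. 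In this merger, $f_i^+(v_{\mathit{ll}})$ and $f_i^-(u_1)$ both become interior to the extended edge, so they are not vertices of $\sndOrdOnlyP$. This is precisely the content of the upper-hull formula in the lemma, and a completely symmetric argument on the right yields the lower-hull formula.

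The main technical obstacle is this absorption bookkeeping: the shear $g$ can create extra boundary vertices (mapping a vertical edge of $P_{i-1}^{\alpha_i}$ to a slope-$1$ boundary edge containing an intermediate shifted vertex) which then get merged away by the Minkowski sum with $T$. I would therefore carry out the proof in a short case analysis (single leftmost vertex vs.\ vertical left edge, and likewise on the right), verifying in each case by explicit collinearity computations that the candidate vertex lists produced by the Minkowski-sum decomposition collapse exactly to the formulas in the statement.
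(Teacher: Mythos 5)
Your proposal is correct, and it reaches the lemma by a genuinely different route than the paper. The paper first proves the slope identity $\mathrm{slope}^{-1}\mapsto\mathrm{slope}^{-1}+1$ (its Lemma on bounded slopes, the same computation as your $s\mapsto s/(1+s)$), then \emph{defines} a candidate polygon as the convex hull of the transformed vertex set $f_i^+(\upperhull(P_{i-1}^{\alpha_i}))\cup f_i^-(\lowerhull(P_{i-1}^{\alpha_i}))$, identifies its hulls, and finally proves by a two-sided inclusion that this candidate equals $\sndOrdOnlyP$ — the inclusion $P\subseteq\sndOrdOnlyP$ via convex combinations, and $\sndOrdOnlyP\subseteq P$ via a contradiction argument exploiting that all boundary slopes are at most $1$ while each point of $P_{i-1}^{\alpha_i}$ maps to a slope-$1$ segment. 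You instead write $\sndOrdOnlyP$ as the Minkowski sum $g(P_{i-1}^{\alpha_i})\oplus T$ of the sheared polygon with a slope-$1$ segment and read off the boundary from the standard normal-fan decomposition of a polygon-plus-segment sum; this replaces the paper's double-inclusion argument (the bulk of its proof) with one invocation of a known structure theorem, at the price of having to state that theorem carefully in the degenerate case where an edge of $g(P_{i-1}^{\alpha_i})$ is parallel to $T$ — exactly the sheared vertical edges, whose absorption into the connecting segments you correctly identify as the delicate bookkeeping. Your identification of the attachment vertices ($(1,1)$-minimal and $(1,1)$-maximal points of $P_{i-1}^{\alpha_i}$, which for a $\polytope$ polygon are $v_{\mathit{ll}}$ and $v_{\mathit{ur}}$ since no edge has slope $-1$) is also sound. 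In short: same geometric insight, but your packaging via support functions is more modular, while the paper's is more elementary and self-contained.
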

\ifconf{%
	A formal proof is given in the extended version.
}{%
	We defer the formal proof to Appendix~\ref{app:polytope-shift}.
}%
Intuitively, since
each point in $P_{i-1}^{\alpha_i}$ is mapped to a line-segment with slope $1$ in
$\sndOrdOnlyP $, $\sndOrdOnlyP $ is obtained by sliding $P_{i-1}^{\alpha_i}$ along the line
$x=y$. 
\ifconf{}{%
	Note here that we could allow $z_i^- = - \infty$ and/or $z_i^+ = \infty$,
	where the functions $f_i^-,f_i^+$ would instead map to the ray centered at $(x,x+y)$ and
	either pointed upwards or downwards with slope $1$.
}%
The full transformation from $P_{i-1}$ to $\sndOrdOnlyP $ can now be stated as:

\begin{lemma}[$P_{i-1}$ to $\sndOrdOnlyP $]
\label{lem:full_mapping}
	Let $f_i^{*,\alpha_i}$ be the function 
	$f_i^{*,\alpha_i}(x,y) = (x + \alpha_i y + z_i^*,\alpha_i y + z_i^*)$ for 
	$* \in \{-,+\}$. If $P_{i-1}$ is $\polytope$, then
	$\sndOrdOnlyP $ is $\polytope$ with 
	\begin{align*}
	\upperhull(\sndOrdOnlyP ) &\wwrel= \bigl\{f_i^{-,\alpha_i}(v_{\mathit{ll}})\bigr\} \cup f_{i}^{+,\alpha_i}(\upperhull(P_{i-1}))\\
	\lowerhull(\sndOrdOnlyP ) &\wwrel= f_{i}^{-,\alpha_i}(\lowerhull(P_{i-1})) \cup \bigl\{ f_i^{+,\alpha_i}(v_{\mathit{ur}})\bigr\}
	\end{align*}
	with $v_{\mathit{ll}}$ and $v_{\mathit{ur}}$ the lower-left resp.\ upper-right vertex of $P_{i-1}$.
\end{lemma}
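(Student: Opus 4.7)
The plan is to prove Lemma~\ref{lem:full_mapping} by simply composing the two-stage decomposition already set up in Corollary~\ref{cor:scaling}: first scale $P_{i-1}$ in the $y$-direction by $\alpha_i$ to obtain $P_{i-1}^{\alpha_i}$, and then apply the slope-$1$ ``sliding'' transformation analyzed in Lemma~\ref{lem:polytope_shift}. On a single point, the composition unpacks as $f_i^*(x,\alpha_i y) = (x + \alpha_i y + z_i^*, \alpha_i y + z_i^*) = f_i^{*,\alpha_i}(x,y)$, so the two stages together realize exactly the maps appearing in the statement.

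For the first stage, Lemma~\ref{lem:scaling} (applied with $\alpha_i > 0$) gives that $P_{i-1}^{\alpha_i}$ is $\polytope$, which is the hypothesis needed to invoke Lemma~\ref{lem:polytope_shift}. Next I would check that the $y$-scaling is compatible with the hull decomposition: since $\alpha_i > 0$, the map $(x,y)\mapsto (x,\alpha_i y)$ is a bijection that fixes $x$-coordinates and is strictly monotone in $y$, so for any fixed $x$ it preserves the relative vertical order of points. Consequently a vertex $v$ of $P_{i-1}$ lies in $\upperhull(P_{i-1})$ (resp.\ $\lowerhull(P_{i-1})$) iff its image lies in $\upperhull(P_{i-1}^{\alpha_i})$ (resp.\ $\lowerhull(P_{i-1}^{\alpha_i}))$, and in particular the leftmost/rightmost vertices $v_{\mathit{ll}}, v_{\mathit{ur}}$ of $P_{i-1}$ correspond under this map to the lower-left/upper-right vertices of $P_{i-1}^{\alpha_i}$ used in Lemma~\ref{lem:polytope_shift}.

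For the second stage, apply Lemma~\ref{lem:polytope_shift} to $P_{i-1}^{\alpha_i}$. This yields that $\sndOrdOnlyP$ is $\polytope$, together with the hull formulas $\upperhull(\sndOrdOnlyP) = \{f_i^-(v_{\mathit{ll}}')\} \cup f_i^+(\upperhull(P_{i-1}^{\alpha_i}))$ and $\lowerhull(\sndOrdOnlyP) = f_i^-(\lowerhull(P_{i-1}^{\alpha_i})) \cup \{f_i^+(v_{\mathit{ur}}')\}$, where $v_{\mathit{ll}}', v_{\mathit{ur}}'$ are the images of $v_{\mathit{ll}}, v_{\mathit{ur}}$ under the $y$-scaling. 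Substituting the pointwise identity $f_i^* \circ \text{scale}_{\alpha_i} = f_i^{*,\alpha_i}$ and using the hull correspondence from the previous paragraph turns these into the formulas claimed in the lemma.

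The only real obstacle is the bookkeeping in the second paragraph: verifying that the $y$-scaling preserves $\upperhull$/$\lowerhull$ membership and the identity of the leftmost/rightmost vertices. Once that correspondence is explicit, the result is essentially a change of variables that transports Lemma~\ref{lem:polytope_shift} through the scaling supplied by Lemma~\ref{lem:scaling}, and no further geometric argument is needed.
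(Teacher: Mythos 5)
Your proposal is correct and follows the same route as the paper, which proves this lemma simply by combining Corollary~\ref{cor:scaling}, Lemma~\ref{lem:scaling}, and Lemma~\ref{lem:polytope_shift}. The extra bookkeeping you supply --- that the $y$-scaling by $\alpha_i>0$ preserves upper/lower hull membership and the identity of $v_{\mathit{ll}}$ and $v_{\mathit{ur}}$ --- is a detail the paper leaves implicit, and you verify it correctly.
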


\begin{proof}
	This follows immediately from Corollary~\ref{cor:scaling} and
	Lemmas~\ref{lem:scaling} and~\ref{lem:polytope_shift}.
\qed\end{proof}

\paragraph{Step 2: Truncating by value and slope.}
\label{sec:step2}
To complete the transformation, we need to add the constraints $x_i^- \leq b_i \leq
x_i^+$ and $y_i^- \leq b_i - b_{i-1} \leq y_i^+$
to $\sndOrdOnlyP$.
This is equivalent to cutting our polygon with two vertical and horizontal planes. 
The following lemma shows that this
preserves the $\polytope$-property. 
\begin{lemma}[\# new vertices]
	\label{lem:invariant}
	If $P_{i-1}$ is $\polytope$ with $k$ vertices, then $P_i$ is either empty or $\polytope$
	with at most $k + 6$ vertices.
\end{lemma}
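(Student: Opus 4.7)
The plan is to combine Lemma~\ref{lem:full_mapping}, which handles Step~1 (the second-order constraint), with a classical bound on the intersection of convex polygons to handle Step~2 (the truncation by value and first-order constraints). By Lemma~\ref{lem:full_mapping}, if $P_{i-1}$ is $\polytope$ with $k$ vertices, then $\sndOrdOnlyP$ is also $\polytope$, with upper hull obtained by prepending $f_i^{-,\alpha_i}(v_{\mathit{ll}})$ to the image $f_i^{+,\alpha_i}(\upperhull(P_{i-1}))$, and symmetrically for the lower hull. Each hull gains exactly one vertex over the corresponding hull of $P_{i-1}$, giving $|V(\sndOrdOnlyP)| = k + 2$ once one verifies that the leftmost and rightmost vertices remain shared between the two new hulls, so that no double-counting occurs.

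Next, I will observe that $P_i = \sndOrdOnlyP \cap R$, where $R := [x_i^-, x_i^+] \times [y_i^-, y_i^+]$ is the axis-aligned rectangle defined by the value and first-order constraints at index $i$. If this intersection is empty, we are done. Otherwise, I apply the standard fact that the intersection of two convex polygons with $a$ and $b$ vertices has at most $a + b$ vertices (each edge of either polygon contributes at most one connected arc to the boundary of the intersection). Applied to $\sndOrdOnlyP$ (with $k+2$ vertices) and $R$ (with $4$ vertices), this yields $|V(P_i)| \le (k+2) + 4 = k + 6$.

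Finally, I need to verify that $P_i$ is $\polytope$. Every edge of $P_i$ is either (i)~a subsegment of an edge of $\sndOrdOnlyP$, whose slope is $\ge 0$ by Lemma~\ref{lem:full_mapping}, or (ii)~a subsegment of an edge of $R$, which is horizontal (slope $0$) or vertical (slope $\infty$); both cases are $\ge 0$ under the convention of Definition~\ref{def:our_polytope}. Hence $P_i$ inherits the $\polytope$-property. The main obstacle is the bookkeeping for $|V(\sndOrdOnlyP)| = k+2$: Lemma~\ref{lem:full_mapping} describes the upper and lower hulls separately, so a careful argument is required to confirm that $f_i^{-,\alpha_i}(v_{\mathit{ll}})$ and $f_i^{+,\alpha_i}(v_{\mathit{ur}})$ are indeed the new leftmost and rightmost vertices shared between the two hulls, so that the total vertex count grows by exactly $2$ rather than $4$. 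Once this is settled, the intersection bound and the slope check together yield the claim.
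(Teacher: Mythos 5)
Your proof is correct and follows essentially the same route as the paper: first bound the vertices of $\sndOrdOnlyP$ by $k+2$ via Lemma~\ref{lem:full_mapping}, then account for the value and first-order constraints with $+4$, and observe that axis-parallel cuts preserve the $\polytope$-property. The only (immaterial) difference is that you charge the $+4$ to a single intersection with the rectangle $[x_i^-,x_i^+]\times[y_i^-,y_i^+]$ using the $a+b$ bound for intersecting convex polygons, whereas the paper performs four successive half-plane cuts, each adding at most one vertex.
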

It follows that over the course of the algorithm, only
$O(n)$ vertices are added in total. This will be instrumental for analyzing the running time.
\begin{proof}
	We know that $\sndOrdOnlyP $ is $\polytope$, and it follows easily from the
	definition that cutting by horizontal and vertical planes will preserve this property.
	Furthermore, note that cutting a convex polygon
	will increase the total number of vertices by at most one.
	We added at most 2 vertices to $P_{i-1}$ to obtain $\sndOrdOnlyP $. 
	We then cut $\sndOrdOnlyP $ by the inequalities 
	$x \leq x_i^+$, $x \geq x_i^-$, $y \leq y_i^-$, and $y \geq y_i^+$, 
	\ie, two horizontal and vertical planes. Each adds at most one vertex, 
	giving the desired upper bound.
\qed\end{proof}

\subsection{Algorithm}
\label{sec:algorithm}

A direct implementation of the transformation of 
Lemma~\ref{lem:full_mapping} yields a ``brute force'' algorithm
that maintains all vertices of $P_i$ and checks if $P_n$ is empty;
(the running time would be quadratic).
	It works as follows:
	\begin{enumerate}[label=\arabic*.]
	\item{} \textsl{[Init]:}\;
		Compute the vertices of $P_2$.
	\item{} \textsl{[Compute $P_i$]:}\;
		For $i=3,\ldots,n$, do the following:
		\begin{enumerate}[label=\arabic{enumi}.\arabic*.,leftmargin=2em]
		\item
		At step $i$, scale the $y$-coordinate of each vertex by $\alpha_i$.
		\item Apply $f_i^+$ resp.\ $f_i^-$ to each vertex, 
		depending on which hull it is in.
		\item Add the new vertex to $\upperhull$ and $\lowerhull$, as per
		Lemma~\ref{lem:full_mapping}.
		\item Delete all the vertices outside $[x_{i}^-,x_{i}^+]\times [y_{i}^-,y_{i}^+]$ and\\
		add the vertices created by intersecting with $[x_{i}^-,x_{i}^+]\times
		[y_{i}^-,y_{i}^+]$.
		\end{enumerate}
	\item{} \textsl{[Compute $\vec b$]:}\;
		If $P_n\ne\emptyset$, compute $(b_1,\ldots,b_n)$ by backtracing.
	\end{enumerate}
Observe that Lemma \ref{lem:full_mapping} applies the \emph{same} linear function 
(multiplication of $y$-coordinate by $\alpha_{i}$ and $f_i^+$ or $f_i^-$) 
to \emph{all} vertices in $\upperhull$ resp.\ $\lowerhull$.
So, we do not need to modify every vertex each time;
instead, we can store~-- separately for 
$\upperhull$ and $\lowerhull$~--
the \emph{composition} of the linear transformations as a matrix. Whenever we access a vertex, 
we take the unmodified vertex and apply the cumulative transformation in $O(1)$ time.

At each step, after applying the linear transformations, by Lemma \ref{lem:full_mapping}
we also need to copy the leftmost vertex of $\lowerhull$, add it to the left of
$\upperhull$ and copy the rightmost vertex of $\upperhull$ and add it to the right of
$\lowerhull$. To add these vertices, we simply apply the inverse of each respective
cumulative transformation such that all stored vertices 
require the same transformation. This will also take $O(1)$ time.

Since all the slopes of $\sndOrdOnlyP $ are non-negative ($\polytope$)
and we keep vertices sorted by $x$-coordinate,
the truncation by a horizontal or
vertical plane can only remove a prefix or suffix
from $\upperhull$ and $\lowerhull$ of $\sndOrdOnlyP $. 
Depending on the constraint we are adding, 
($x \leq x_i^+$, $x \geq x_i^-$, $y \leq y_i^-$, or $y \geq y_i^+$), 
we start at the rightmost or leftmost vertex of the $\upperhull$
and $\lowerhull$, and continue until we find the intersection with the cutting plane. 
We remove all visited vertices. 

This could take $O(n)$ time in any single iteration, but the total
cost over all iterations is $O(n)$ since we start with $O(1)$ vertices 
and add $O(n)$ vertices throughout the entire procedure
(by Lemma~\ref{lem:invariant}). 
This allows us to use two deques (\underline double-\underline ended \underbar{que}ues), 
represented as arrays,
to store the vertices of $\upperhull$ and $\lowerhull$.
Putting this all together gives the linear time algorithm for the decision problem 
``$\solutionset = \emptyset$?''.

To compute an actual solution when $\solutionset\ne \emptyset$, 
we compute $b_n,\ldots,b_1$, in this order.
From the last $P_n$, we can find a feasible $b_n$ (the $x$-coordinate of any point in $P_n$).
Then, we retrace the steps of our algorithm through specific points in each~$P_i$. 
Since intermediate $P_i$ were only implicitly represented,
we have to recover $P_i$ by ``undoing'' the algorithm's operations in reverse order;
this is possible in overall time $O(n)$
by remembering the operations from the forward phase.
\ifconf{%
	The details on the backtracing step are presented in the extended version of this article.
}{%
	The details on the backtracing step are deferred to Appendix~\ref{app:backtracing},
	where we also present the final algorithm.
}

\section{Conclusion}

In this article,
we presented a linear-time dynamic-programming algorithm to decide
whether there is a vector $\vec b$
that lies (componentwise) between given upper and lower bounds and
additionally satisfies inequalities on its first- and second-order (successive) differences.
This method can be used to approximate weighted-$L_\infty$ shape-restricted
function-fitting problems, where the shape restrictions are given as bounds
on first- and/or second-order differences (local slope and curvature).

This is a first step towards much sought-after efficient methods for more 
general convex regression tasks.
A main limitation of our approach is the restriction to one-dimensional problems.
We show in 
\ifconf{%
	the extended version of this article (\href{https://arxiv.org/abs/1905.02149}{\texttt{arxiv.org/abs/1905.02149}})
}{%
	Appendix~\ref{app:dags-hard}
}%
that a natural extension of the problem studied here
to directed acyclic graphs is already as hard as linear programming, 
leaving little hope for an efficient generic solution.
This is in sharp contrast to isotonic regression, where similar extensions
to arbitrary partial orders do have efficient algorithms (for $L_\infty$)~\cite{Stout2014}.
This might also be bad news for multidimensional regression with second-order constraints, 
since higher dimensions entail, among other complications, a non-total order over the inputs.

A second limitation is the $L_\infty$ error metric, 
which might not be adequate for all applications.
We leave the question whether similarly efficient methods are also possible for other metrics
for future work.
A further extension to study is convex \emph{unimodal} regression; 
here, finding the maximum is part of the fitting problem, and so not directly possible
with our presented method.

\section*{Acknowledgments}
We thank Richard Peng, Sushant Sachdeva, and Danny Sleator for insightful discussions,
and our anonymous referees for further relevant references and insightful comments 
that significantly improved the presentation.

	\bibliographystyle{plainurl}
\bibliography{clean-refs}

\clearpage

	\begin{appendix}
	\section*{Appendix}
		\section{Simple greedy algorithm for convex regression}
\label{app:greedy-proof}

In this appendix, we give details on a simpler algorithm for the special case of
unweighted convex function fitting.

\begin{theorem}
	\label{thm:minor}
	There exists an algorithm for the unweighted $L_\infty$ convex regression 
	that runs in $O(n)$ time. 
\end{theorem}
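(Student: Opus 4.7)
The plan is to reduce unweighted $L_\infty$ convex regression to computing a lower convex hull of the input, which admits a standard linear-time greedy algorithm since the $x_i$ are already sorted. Let $h$ denote the piecewise-linear lower convex hull of the points $(x_1,y_1),\ldots,(x_n,y_n)$, let $D_i = y_i - h(x_i) \ge 0$ be the vertical excess of the $i$th data point above the hull, and set $D^* = \max_i D_i$. I will claim that the optimum error equals $D^*/2$, that it is achieved by $f^*_i := h(x_i) + D^*/2$, and that $f^*$ can be computed in $O(n)$ time.

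For the lower bound $L^* \ge D^*/2$, I would fix any convex $g$ with $\max_i |g_i - y_i| \le L$ and any index $b$. Let $(a,c)$ be the edge of $h$ immediately below $x_b$, so that $a < b < c$, $y_a = h(x_a)$, $y_c = h(x_c)$, and $h(x_b) = \mathrm{lerp}_{x_b}(y_a, y_c)$ (linear interpolation at $x_b$). Convexity of $g$ combined with $g_a \le y_a + L$ and $g_c \le y_c + L$ gives $g_b \le \mathrm{lerp}_{x_b}(g_a, g_c) \le \mathrm{lerp}_{x_b}(y_a, y_c) + L = y_b - D_b + L$, and combining with $g_b \ge y_b - L$ forces $D_b \le 2L$. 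Maximizing over $b$ yields $D^*/2 \le L$.

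For the upper bound, $f^*$ is convex because $h$ is convex and an additive constant preserves convexity; moreover $f^*_i - y_i = D^*/2 - D_i \in [-D^*/2, D^*/2]$ since $0 \le D_i \le D^*$, so $\max_i |f^*_i - y_i| = D^*/2$. Algorithmically, I would first compute $h$ via the classical stack-based monotone-chain algorithm (linear time because the $x_i$ are sorted), then compute the distances $D_i$ and their maximum by a single left-to-right sweep that advances one pointer along the hull edges and another along the data points, and finally output $f^*_i = h(x_i) + D^*/2$. All three steps are $O(n)$.

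The only non-routine step is identifying the hull edge $(a,c)$ directly below $b$ as the triple yielding the tightest convexity constraint on $g_b$. This is immediate from the fact that $h$ is a convex function lying pointwise below every data point: by convexity, $h(x_b) \le \mathrm{lerp}_{x_b}(h(x_a), h(x_c)) \le \mathrm{lerp}_{x_b}(y_a, y_c)$ for every $a < b < c$, with equality when $(a,c)$ are adjacent hull vertices bracketing $x_b$. Consequently the maximum of $y_b - \mathrm{lerp}_{x_b}(y_a, y_c)$ over such triples is exactly $D_b$, so the argument above is tight. After this identification, I do not anticipate any substantial technical obstacle, since both the hull computation and the two-pointer distance sweep are entirely standard.
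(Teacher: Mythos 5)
Your proposal is correct and is essentially the paper's own argument: the paper likewise reduces the problem to the lower convex hull of the data, shows the optimum equals half the maximum vertical gap $D^*$ between a data point and the hull (via the same convexity argument on the bracketing hull edge), and outputs the hull shifted up by $D^*/2$, all in $O(n)$ time using sortedness of the $x_i$. Your phrasing in terms of $D_i = y_i - h(x_i)$ and the closed form $f^*_i = h(x_i) + D^*/2$ is just a cleaner restatement of the paper's ``minimum $\Delta$ with $b_i \ge a_i - \Delta$'' criterion, and it also covers non-uniform $x$-spacing, which the paper only mentions as an adaptation.
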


\begin{proof}
We consider the following problem. Given an $n$-dimensional vector $\vec a$, 
and parameter $\Delta \geq 0$, find a convex vector $\vec b$ 
such that $\norm{b-a}_\infty \leq \Delta$, if such a vector
exists. 

This clearly fits under our parameters of Definition~\ref{def:problem} by setting
$\vec{x^-} = \vec a - \Delta, \vec{x^+} = \vec a + \Delta$, 
both $\vec{y^-}$ and $\vec{y^+}$ to be unbounded, and $\vec{z^-} = 0,
\vec{z^+} = \infty$, along with $\vec \alpha = 1$. 
A binary search on $\Delta$ gives a $O(n\log \frac{U}{\epsilon})$ algorithm.
	
	However, this can also be solved by considering the set of points $(i,a_i + \Delta)$ for
all $i$, and taking the lower hull,%
\footnote{%
	The lower hull of a set of points is the subset of vertices $(x_i,y_i)$ of the 
	convex hull, where $y_i$ is the minimal $y$-coordinate of all
	points with the $x$-coordinate $x_i$ in the convex hull;
	see also Definition~\ref{def:upper}.
}
$H(\Delta)$, such that for each point $(i,h_i)$
in this lower hull we set $b_i = h_i$. We claim that the minimum possible $\Delta$ such
that $b_i\ge a_i-\Delta$ is exactly the answer to this problem. If $(i,a_i+\Delta)$ is a
vertex of the convex hull, $b_i=a_i+\Delta$ is always at least $a_i-\Delta$.
Otherwise, let $(j,a_j + \Delta)$, $(k,a_k + \Delta)$ be two vertices of $H$ such that $j < i
< k$. We have
\[
b_i \wrel\ge a_i-\Delta
\]
\[
\iff a_{j}+\Delta+\frac{a_{k}-a_{j}}{k-j}(i-j) \wrel\ge a_i-\Delta
\]
\[
\iff \Delta \wrel\ge \frac{1}{2}\left(a_i-a_{j}+\frac{i-j}{k-j}(a_{k}-a_{j})\right)
\]
If $\Delta$ violates this for some $i,j,k$, then it is impossible to fit a convex function
through the intervals $[(j,a_j - \Delta),(j,a_j + \Delta)]$, $[(i,a_i - \Delta),(i,a_i +
\Delta)]$, and $[(k,a_k - \Delta),(k,a_k + \Delta)]$.

Conversely, if $\Delta$ satisfies all of such constraints, $b_i\ge a_i-\Delta$ for all
$1\le i\le n$, then $b_i$ cannot be greater than $a_i+\Delta$ as that would violate $H$ being
the convex lower hull of $(i,a_i + \Delta)$. Thus, $(b_1,\ldots,b_n)$ is a possible
solution.

It takes $O(n)$ time to compute the lower convex hull and $O(n)$ time to calculate the
minimum $\Delta$. Thus, this algorithm solves $L_\infty$ convex regression in $O(n)$ time.
\qed\end{proof}
The above method can also be adapted for inputs with $x$-values that are non-uniformly spaced.
However, it does not directly generalize to weighted $L_\infty$ regression: 
moving points up by $w_i\cdot \Delta$ can lead to different lower hulls for different values
of $\Delta$.

		\clearpage
		\section{Proof of Lemma~\ref{lem:polytope_shift}}\label{subsec:proof}
\label{app:polytope-shift}


The proof of Lemma~\ref{lem:polytope_shift} will be separated into two stages. 
First, we show that the polygon defined by
$\{f_{i}^+(\upperhull(P_{i-1}^{\alpha_i}))\} \cup
\{f_{i}^-(\lowerhull(P_{i-1}^{\alpha_i}))\}$
has upper-hull $\{f_i^-(v_{\mathit{ll}}), f_{i}^+(\upperhull(P_{i-1}^{\alpha_i}))\}$
and lower-hull $\{f_{i}^-(\lowerhull(P_{i-1}^{\alpha_i})), f_i^+(v_{\mathit{ur}})\}$, where
$v_{\mathit{ll}}$ is the first vertex of $\lowerhull(P_{i-1}^{\alpha_i})$ and
$v_{\mathit{ur}}$ is the last vertex of $\upperhull(P_{i-1}^{\alpha_i})$.
Furthermore, this polygon will have slopes between vertices in $[0,1]$.
This property will then allow us to show that $\sndOrdOnlyP $ is equivalent to the convex hull of the vertices, 
which implies the claim.

%

In order to show that the $\sndOrdOnlyP $ has all slopes between $0$ and $1$, 
we consider how $f_i^-$ and $f_i^+$ affect slopes.


\begin{lemma}[Bounded slopes]\label{lem:slope_bounded}
	If $P$ is  $\polytope$, then for any connected vertices $v_j,v_k \in V$, any $i$, and $* \in \{-,+\}$, we have
	\[
	0 \leq \text{slope}(f_{i}^*(v_j),f_{i}^*(v_{k})) \leq 1
	\]
	and for any connected vertices $v_j,v_k,v_l \in V$, 
	if $\text{slope}(v_j,v_k) < \text{slope}(v_k,v_l)$, then 
	\[
	\text{slope}(f_{i}^*(v_j),f_{i}^*(v_{k})) < \text{slope}(f_{i}^*(v_{k}),f_{i}^*(v_{l}))
	\]
\end{lemma}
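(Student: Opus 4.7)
The plan is to make the slope transformation under $f_i^*$ completely explicit and observe that it is a monotone function on $[0,\infty]$. Note that $f_i^*$ is the affine shear $(x,y)\mapsto(x+y+c,\,y+c)$ where $c=z_i^*$ is a fixed constant (depending on $i$ and the choice of $-$ or $+$). The constant shift $c$ is irrelevant for slopes, so all the action is in the linear part $(x,y)\mapsto(x+y,y)$.

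Given two vertices $v_j=(x_j,y_j)$ and $v_k=(x_k,y_k)$ with $x_j\neq x_k$, write $s=\text{slope}(v_j,v_k)=(y_k-y_j)/(x_k-x_j)$. Direct computation gives
\[
\text{slope}(f_i^*(v_j),f_i^*(v_k))\wrel=\frac{(y_k+c)-(y_j+c)}{(x_k+y_k+c)-(x_j+y_j+c)}\wrel=\frac{y_k-y_j}{(x_k-x_j)+(y_k-y_j)}\wrel=\frac{s}{1+s}.
\]
(If instead $x_j=x_k$, then $v_j\ne v_k$ forces $y_j\ne y_k$, and one checks that $f_i^*$ maps this vertical edge to one with slope exactly $1$, which matches the limit $s/(1+s)\to1$ as $s\to\infty$.) In particular, the image edge is never vertical, so the formula is always well-defined.

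Since $P$ is $\polytope$, every original edge slope $s$ lies in $[0,\infty]$. The function $g(s)=s/(1+s)=1-1/(1+s)$ is strictly increasing on $[0,\infty]$ and maps this interval into $[0,1]$, giving the first inequality $0\le\text{slope}(f_i^*(v_j),f_i^*(v_k))\le 1$. Strict monotonicity of $g$ then immediately yields the second claim: if $\text{slope}(v_j,v_k)<\text{slope}(v_k,v_l)$, applying $g$ to both sides preserves the strict inequality, so $\text{slope}(f_i^*(v_j),f_i^*(v_k))<\text{slope}(f_i^*(v_k),f_i^*(v_l))$.

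The only subtlety~-- and the one place I would be careful in writing out~-- is the handling of $s=\infty$ (vertical edges). This affects both the well-definedness of the computation above and the validity of ``strict monotonicity'' at the endpoint of the interval. Both are resolved by checking directly that a vertical edge of $P$ maps to an edge of slope exactly $1$ under $f_i^*$, which is consistent with extending $g$ continuously to $s=\infty$. No other case analysis is required; the lemma reduces to a one-line algebraic identity plus monotonicity of a rational function.
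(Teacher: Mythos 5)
Your proof is correct and follows essentially the same route as the paper's: both compute the image slope explicitly as $s/(1+s)$ (the paper phrases the same identity via reciprocals, $\text{slope}(f_i^*(v_j),f_i^*(v_k))^{-1} = \text{slope}(v_j,v_k)^{-1}+1$), handle the endpoint cases $s=0$ and $s=\infty$ separately, and conclude both claims from monotonicity. Your explicit remark that the shift constant $c=z_i^*$ cancels and that $g(s)=s/(1+s)$ extends continuously to $s=\infty$ is a slightly cleaner packaging of the same argument.
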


\begin{proof}
	We first write the slope function explicitly to obtain  
	\[
			\text{slope}(f_{i}^*(v_j),f_{i}^*(v_{k})) 
		\wrel= 
			\frac{(y_j + z_{i}^*) - (y_{k} + z_{i}^*)}{(x_j + y_j + z_{i}^*) - (x_{k} + y_{k} + z_{i}^*)} 
		\wrel=
			\frac{y_j - y_{k}}{(x_j - x_{k}) + (y_j - y_{k})}.
	\]
	This implies that if $\text{slope}(v_j,v_k) = \infty$ then $	\text{slope}(f_{i}^*(v_j),f_{i}^*(v_{k})) = 1$, and if $\text{slope}(v_j,v_k) = 0$ then $	\text{slope}(f_{i}^*(v_j),f_{i}^*(v_{k})) = 0$. Furthermore, this gives the identity
	\[
	\text{slope}(f_{i}^*(v_j),f_{i}^*(v_{k}))^{-1} = 
	\text{slope}(v_j,v_{k})^{-1} + 1
	\]
	when $\text{slope}(v_j,v_{k}) \in (0,\infty).$
	Combined with the fact that all slopes are non-negative,
	this gives both of our desired inequalities.
\qed\end{proof}

The first inequality of the lemma above will allow us to show that all of the slopes
between vertices are bounded, and the second implies that each of the vertices remains a
vertex, giving the following corollary.

\begin{corollary}[Hulls by elementwise transformation]\label{cor:mapping_is_polytope}
~\\
	If $P_{i-1}^{\alpha_i}$ is $\polytope$, then the convex hull $P$ of 
	\( 
		V = f_{i}^+(\upperhull(P_{i-1}^{\alpha_i})) \cup f_{i}^-(\lowerhull(P_{i-1}^{\alpha_i}))
	\)	
	has $\upperhull(P) = \{f_i^-(v_{\mathit{ll}}), f_{i}^+(\upperhull(P_{i-1}^{\alpha_i}))\}$
	and
	$\lowerhull(P) = \{f_{i}^-(\lowerhull(P_{i-1}^{\alpha_i})), f_i^+(v_{\mathit{ur}})\}$,  
	where $v_{\mathit{ll}}$ is the first (lower-left) vertex of $\lowerhull(P_{i-1}^{\alpha_i})$ and $v_{\mathit{ur}}$ is
	the last (upper-right) vertex of $\upperhull(P_{i-1}^{\alpha_i})$. 
	Furthermore, for any connected
	vertices $v_j,v_k$ in $P$, we have \(0 \leq \text{slope}(v_j,v_k) \leq 1\).
\end{corollary}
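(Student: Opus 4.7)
The plan is to derive the corollary from Lemma~\ref{lem:slope_bounded} by tracking what happens to the upper and lower hulls of $P_{i-1}^{\alpha_i}$ separately under $f_i^+$ and $f_i^-$, and then stitching the two transformed chains into a single closed convex polygon using two ``bridge'' edges, each of slope exactly $1$.

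First I would observe that because $P_{i-1}^{\alpha_i}$ is $\polytope$, its leftmost vertex $v_{\mathit{ll}}$ is the smallest-$x$ vertex on both $\upperhull(P_{i-1}^{\alpha_i})$ and $\lowerhull(P_{i-1}^{\alpha_i})$, and similarly for $v_{\mathit{ur}}$ on the right. Applying Lemma~\ref{lem:slope_bounded} edge by edge shows that every slope in $f_i^+(\upperhull(P_{i-1}^{\alpha_i}))$ and in $f_i^-(\lowerhull(P_{i-1}^{\alpha_i}))$ lies in $[0,1]$, and the second part of that lemma preserves strict monotonicity of the slope sequence along each chain, so concavity of the upper-hull chain and convexity of the lower-hull chain are carried across the transformation.

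Next I would compute the bridge edges explicitly. Since $f_i^+(x,y) - f_i^-(x,y) = (z_i^+ - z_i^-,\; z_i^+ - z_i^-)$, the segments from $f_i^-(v_{\mathit{ll}})$ to $f_i^+(v_{\mathit{ll}})$ and from $f_i^-(v_{\mathit{ur}})$ to $f_i^+(v_{\mathit{ur}})$ both have slope exactly $1$. Because $1$ upper-bounds every other slope in the two transformed chains, prepending the left bridge to $f_i^+(\upperhull(P_{i-1}^{\alpha_i}))$ keeps the resulting upper chain concave (its slope sequence is weakly decreasing), and appending the right bridge to $f_i^-(\lowerhull(P_{i-1}^{\alpha_i}))$ keeps the lower chain convex. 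The two chains share endpoints $f_i^-(v_{\mathit{ll}})$ and $f_i^+(v_{\mathit{ur}})$, bounding a closed convex polygon $Q$ whose upper and lower hulls are exactly the two sets named in the corollary, with all edge slopes in $[0,1]$.

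It then remains to verify $Q = P$ where $P$ is the convex hull of $V$. Every listed vertex of $Q$ lies in $V$, so $Q \subseteq P$. For the reverse inclusion it suffices to check that every point of $V$ lies in $Q$; the nontrivial cases are the images $f_i^+(v)$ of lower-hull vertices $v$ of $P_{i-1}^{\alpha_i}$ and the images $f_i^-(v)$ of upper-hull vertices $v$ (these belong to $V$ but are not among the listed vertices of $Q$). I expect this to be the main obstacle: one must show that each such point lies between the claimed upper and lower hulls of $Q$ at its own $x$-coordinate. The argument reduces to comparing vertices of the two original hulls at matching horizontal positions and using the slope identity $\mathrm{slope}(f_i^*(v_j),f_i^*(v_k))^{-1} = \mathrm{slope}(v_j,v_k)^{-1}+1$ from the proof of Lemma~\ref{lem:slope_bounded} together with the $\polytope$ property, so that the shift by $(z_i^+ - z_i^-,\; z_i^+ - z_i^-)$ never moves a point of $V$ above the upper hull or below the lower hull of $Q$.
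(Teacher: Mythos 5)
Your core argument is the paper's: apply Lemma~\ref{lem:slope_bounded} to each hull separately, observe that the two transformed chains are joined by slope-$1$ ``bridge'' segments, and conclude that the upper chain is concave and the lower chain convex with all slopes in $[0,1]$, so that they bound a convex polygon whose extreme points are exactly the listed ones. One small slip: it is not true that $v_{\mathit{ll}}$ lies on both hulls of $P_{i-1}^{\alpha_i}$ --- the polygon may have a vertical left edge, in which case the leftmost upper-hull vertex $v_{u1}$ differs from $v_{\mathit{ll}}$. The paper handles this by noting that a slope-$\infty$ edge maps to a slope-$1$ edge, so the composite segment from $f_i^-(v_{\mathit{ll}})$ to $f_i^+(v_{u1})$ still has slope exactly $1$; your bridge argument survives with this correction.

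The real divergence is your final paragraph, where the ``main obstacle'' you identify does not exist. By the definition in the corollary, $V$ consists only of $f_i^+$ applied to upper-hull vertices and $f_i^-$ applied to lower-hull vertices; the points $f_i^+(v)$ for $v$ on the lower hull and $f_i^-(v)$ for $v$ on the upper hull are \emph{not} elements of $V$. Consequently every point of $V$ already appears as a listed vertex of your polygon $Q$, the inclusion $P=\mathrm{conv}(V)\subseteq Q$ is immediate from convexity of $Q$, and no comparison at matching $x$-coordinates is needed here. The nontrivial containment you are anticipating --- that sliding points of $P_{i-1}^{\alpha_i}$ by the \emph{other} endpoint of the $z$-interval never escapes the hull --- is genuine, but it is the content of the subsequent Lemma~\ref{lem:mapping_is_desired_polytope} (showing $\sndOrdOnlyP\subseteq P$), not of this corollary. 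Once the phantom step is removed, your proof is complete and matches the paper's.
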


\begin{proof}
	By construction, the first and last vertices of $\upperhull(P)$ and $\lowerhull(P)$ are
the same. Let $v_{u1}$ be the first vertex of $\upperhull(P_{i-1}^{\alpha_i})$, which
gives two possibilities, either (1): $v_{u1} = v_{\mathit{ll}}$, or (2)
$\text{slope}(v_{u1},v_{\mathit{ll}}) = \infty$. For case (1) it is easy to see that
$\text{slope}(f_i^+(v_{u1}),f_i^-(v_{\mathit{ll}})) = 1$, and for case (2), 
we showed in the proof
of Lemma~\ref{lem:slope_bounded} that $\text{slope}(v_{u1},v_{\mathit{ll}}) = \infty$ implies
$\text{slope}(f_i^+(v_{u1}),f_i^+(v_{\mathit{ll}})) = 1$, which combined with
$\text{slope}(f_i^+(v_{\mathit{ll}}),f_i^-(v_{\mathit{ll}})) = 1$ gives
$\text{slope}(f_i^+(v_{u1}),f_i^-(v_{\mathit{ll}})) = 1$. Furthermore the slopes between all
vertices in $\upperhull(P_{i-1}^{\alpha_i})$ are less than $\infty$ by
Definition~\ref{def:upper}, and therefore less than $1$ under the transformation by
Lemma~\ref{lem:slope_bounded}. Along with the second inequality of
Lemma~\ref{lem:slope_bounded}, this implies that $\upperhull(P)$ makes up a concave
function from $f_i^-(v_{\mathit{ll}})$ to $f_i^+(v_{\mathit{ur}})$.
	
	By symmetric reasoning we see that $\lowerhull(P)$ makes up a convex function from
$f_i^-(v_{\mathit{ll}})$ to $f_i^+(v_{\mathit{ur}})$. Additionally, the second inequality states
that every element in 
$\{f_{i}^-(\lowerhull(P_{i-1}^{\alpha_i})), f_i^+(v_{\mathit{ur}})\}$ and 
$\{f_i^-(v_{\mathit{ll}}), f_{i}^+(\upperhull(P_{i-1}^{\alpha_i}))\}$ must be a vertex. 
Accordingly, $P$ must be a
convex polygon with all slopes between $0$ and $1$.
\qed\end{proof}

We now have fixed upper and lower hulls of a polygon, and we use the representation
as the convex hull its vertices, along with the bounded-slope 
property, to show that this polygon is in fact equal to $\sndOrdOnlyP $. 
In particular, all the slopes being bounded by $1$ will be critical here because each point
$(x,y) \in P_{i-1}^{\alpha_i}$ maps to a line segment from $(x + y + z_i^-,y+z_i^-)$ to
$(x + y + z_i^+,y+z_i^+)$, which has slope $1$. If we then consider $(x,y)$ to be in the
upper hull, if the slopes of our new upper-hull for $\sndOrdOnlyP $ were greater than $1$, 
the point $(x + y + z_i^-,y+z_i^-)$ would lie outside of this hull. 
Our bounded slopes prevent this, though, and lead to the following lemma.

\begin{lemma}\label{lem:mapping_is_desired_polytope}
	Let $P_{i-1}^{\alpha_i}$ be $\polytope$ and let $P$ be the convex hull of
\[
	V \wrel=
	\bigl\{f_{i}^+(\upperhull(P_{i-1}^{\alpha_i}))\bigr\} \cup
	\bigl\{f_{i}^-(\lowerhull(P_{i-1}^{\alpha_i}))\bigr\}.
\]
	Then $P = \sndOrdOnlyP $.
\end{lemma}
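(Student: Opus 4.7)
The plan is to prove both inclusions $P \subseteq \sndOrdOnlyP$ and $\sndOrdOnlyP \subseteq P$ separately; the former is straightforward, while the latter is where the actual work lives. For the easy direction, every vertex of $V$ has the form $f_i^{\pm}(v)$ for some vertex $v$ of $P_{i-1}^{\alpha_i}$, so it lies in $\sndOrdOnlyP$ by taking $z = z_i^{\pm}$ in Lemma~\ref{lem:stretch}. Noting that $\sndOrdOnlyP$ is convex, being the Minkowski sum of the affine image of $P_{i-1}^{\alpha_i}$ under the shear $(x,y) \mapsto (x+y, y)$ with the line segment joining $(z_i^-, z_i^-)$ and $(z_i^+, z_i^+)$, we obtain $P = \operatorname{conv}(V) \subseteq \sndOrdOnlyP$.

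For the reverse inclusion, any $(X,Y) \in \sndOrdOnlyP$ can be written as $(x+y+z, y+z) = (1-\lambda) f_i^-(x,y) + \lambda f_i^+(x,y)$ for some $(x,y) \in P_{i-1}^{\alpha_i}$ and $\lambda \in [0,1]$. Since $f_i^{\pm}$ are affine and $(x,y)$ is a convex combination of vertices of $P_{i-1}^{\alpha_i}$, convexity of $P$ reduces the problem to showing $f_i^-(v), f_i^+(v) \in P$ for every vertex $v$ of $P_{i-1}^{\alpha_i}$. For any such vertex, one of these two points lies in $V \subseteq P$ for free: $f_i^+(v) \in V$ when $v \in \upperhull(P_{i-1}^{\alpha_i})$, and $f_i^-(v) \in V$ when $v \in \lowerhull(P_{i-1}^{\alpha_i})$. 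The only remaining task is, say, to show $f_i^-(v) \in P$ when $v = (x_v, y_v)$ lies on the upper hull but not the lower hull (and the symmetric statement).

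The key geometric step is as follows. Let $w = (x_v, y_l(x_v))$ be the point on the lower boundary polyline of $P_{i-1}^{\alpha_i}$ directly below $v$, where $y_l(x_v)$ denotes the smallest $y$-coordinate attained in $P_{i-1}^{\alpha_i}$ at $x$-coordinate $x_v$. Since $w$ lies on some edge $[u_1, u_2]$ of $\lowerhull(P_{i-1}^{\alpha_i})$, affineness of $f_i^-$ writes $f_i^-(w)$ as a convex combination of $f_i^-(u_1)$ and $f_i^-(u_2)$, both in $V \subseteq P$, so $f_i^-(w) \in P$. A direct computation gives
\[
f_i^-(v) - f_i^-(w) \wrel= (y_v - y_l(x_v))(1,1),
\quad
f_i^+(v) - f_i^-(v) \wrel= (z_i^+ - z_i^-)(1,1),
\]
so $f_i^-(w), f_i^-(v), f_i^+(v)$ all lie on a single line of slope $1$, with $f_i^-(v)$ sitting strictly between the two endpoints (both displacement coefficients are nonnegative). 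Hence $f_i^-(v)$ is a convex combination of $f_i^-(w) \in P$ and $f_i^+(v) \in V \subseteq P$, giving $f_i^-(v) \in P$ as desired. The main obstacle is spotting the right auxiliary point $w$ and the slope-$1$ collinearity: this neatly sidesteps the more delicate analysis of where a slope-$1$ ray from $f_i^+(v)$ exits $P$, by instead routing through a concrete point on $P$'s lower hull that is manifestly inside $P$.
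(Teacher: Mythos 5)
Your proof is correct, and for the harder inclusion it takes a genuinely different route from the paper. The easy direction ($P \subseteq \sndOrdOnlyP$) is essentially the same in both: the paper unpacks a convex combination of vertices of $V$ by hand, while you observe that $\sndOrdOnlyP$ is a Minkowski sum and hence convex --- same content, cleaner packaging. For $\sndOrdOnlyP \subseteq P$, the paper argues by contradiction: it reduces to a vertex $(x_v,y_v)$ and $z\in\{z_i^-,z_i^+\}$, then invokes Corollary~\ref{cor:mapping_is_polytope} (all edges of $P$ have slope in $[0,1]$, which itself depends on the $\polytope$ hypothesis via Lemma~\ref{lem:slope_bounded}) together with a domination argument to show the offending point cannot lie above the upper hull or below the lower hull. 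You instead give a direct argument: reduce to showing $f_i^\pm(v)\in P$ for each vertex $v$ via affineness, and then exhibit $f_i^-(v)$ explicitly as a convex combination of $f_i^-(w)$ (with $w$ the point of the lower boundary vertically below $v$, which is manifestly in $P$ as a convex combination of two lower-hull images in $V$) and $f_i^+(v)\in V$, using the fact that $f_i^\pm$ sends vertical displacements to slope-$1$ displacements so that the three points are collinear with the right ordering. What your version buys: it is constructive rather than by contradiction, it does not use the bounded-slope corollary at all, and in fact it never uses the $\polytope$ hypothesis --- so the set equality $P=\sndOrdOnlyP$ holds for an arbitrary convex polygon $P_{i-1}^{\alpha_i}$. (The $\polytope$ property is still needed elsewhere, namely in Corollary~\ref{cor:mapping_is_polytope}, to identify which of the points of $V$ form the upper versus lower hull of the result, which is what Lemma~\ref{lem:polytope_shift} ultimately requires.) One cosmetic remark: ``strictly between'' should be ``between'' --- in the degenerate cases $y_v=y_l(x_v)$ or $z_i^-=z_i^+$ the convex combination is an endpoint --- but this does not affect the argument.
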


\begin{proof}
We show both inclusions.
\begin{itemize}
\item
	$P \subseteq \sndOrdOnlyP $.\\	
	By definition of $P$, any point $u \in P$, can be written as a convex combination 
	\[
	\sum_{(x_j,y_j) \in V(P_{i-1}^{\alpha_i})}p_j\left((x_j+y_j,y_j) + (z_{i}^*,z_{i}^*)\right),
	\] 
	where the sum is over the vertices $(x_j,y_j)$ of $P_{i-1}^{\alpha_i}$, 
	$* \in\{-,+\}$, and $\sum p_j = 1$.
	We set $z = \sum p_j z_{i}^*$; clearly, $z \in [z_{i}^-,z_{i}^+]$. 
	Furthermore set $x = \sum p_j x_i$ and $y = \sum p_j y_j$. 
	We know each $(x_j,y_j)$ is a vertex in
	$P_{i-1}^{\alpha_i}$, so by convexity $(x,y)$ must be in $P_{i-1}^{\alpha_i}$, implying
	$(x + y + x, y+z) \in \sndOrdOnlyP $ by Corollary~\ref{cor:scaling}. 
\item $\sndOrdOnlyP  \subseteq P$.\\	
		Assume towards a contradiction there were 
		$(x+y+z,y+z) \in \sndOrdOnlyP $ with $(x,y) \in P_{i-1}^{\alpha_i}$
	and $z \in [z_{i}^-,z_{i}^+]$, but $(x+y+z,y+z) \notin P$. 
	By definition and assumption,
	both $P$ and $P_{i-1}^{\alpha_i}$ are convex, so there must be a \emph{vertex} $(x_v,y_v)$ of
	$P_{i-1}^{\alpha_i}$ such that $(x_v + y_v + z, y_v + z) \notin P$. Furthermore, by
	convexity of $P$, there must also exist $z \in \{z_{i}^-,z_{i}^+\}$ such that 
	$(x_v + y_v + z, y_v + z) \notin P$.
		Assume \withoutlossofgenerality that $(x_v,y_v) \in \upperhull(P_{i-1}^{\alpha_i})$.
		By definition of $P$, we have
	$(x_v + y_v + z_{i}^+, y_v + z_{i}^+) \in P$, so we must have $z=z_i^-$.
		
	Since $P_{i-1}^{\alpha_i}$ is $\polytope$ and $f_i$ is monotone,
	$f_i^-(v_{\mathit{ll}})$ is dominated%
	\footnote{%
		$(x_1,y_1)$ is said to dominate $(x_2,y_2)$ if $x_1 \ge x_2$ and $y_1\ge y_2$.
	}
	by $(x_v + y_v + z_{i}^-, y_v + z_{i}^-)$, and
	similarly, $f_i^+(v_{\mathit{ur}})$ dominates 
	$(x_v + y_v + z_{i}^-, y_v + z_{i}^-)$. 
	Furthermore, by
	Corollary~\ref{cor:mapping_is_polytope} the upper hull lies above the line segment from
	from $f_i^-(v_{\mathit{ll}})$ to $(x_v + y_v + z_{i}^+, y_v + z_{i}^+)$ and has slope at most 1.
	But the slope between $(x_v + y_v + z_{i}^-, y_v + z_{i}^-)$ and
	$(x_v + y_v + z_{i}^+, y_v + z_{i}^+)$ is exactly $1$, so
	$(x_v + y_v + z_{i}^-, y_v + z_{i}^-)$ cannot lie above the upper hull.
	
	Finally, $(x_v + y_v + z_{i}^-, y_v + z_{i}^-)$ also cannot lie below
	$\lowerhull(P)$ because otherwise there would exist $(x_v,y) \in P_{i-1}^{\alpha_i}$ that
	lies above $(x_v,y_v)$, contradicting $(x_v,y_v)$ being in
	$\upperhull(P_{i-1}^{\alpha_i})$. 
	Because the upper hull and lower hull combine to the
	convex polygon $P$ and because the $x$-coordinate of $(x_v + y_v + z_{i}^-, y_v + z_{i}^-)$ is
	within the range of $x$-coordinate of $P$, we have $(x_v + y_v + z_{i}^-, y_v +
	z_{i}^-)\in P$, a contradiction.
%
\end{itemize}
\qed\end{proof}
With this, we finish the proof of our lemma.
\begin{proof}[of Lemma~\ref{lem:polytope_shift}]
	Follows directly from Corollary~\ref{cor:mapping_is_polytope} 
	and Lemma~\ref{lem:mapping_is_desired_polytope}.
\qed\end{proof}

		\section{Complete algorithm}
\label{app:backtracing}

\RestyleAlgo{boxruled}
\LinesNumbered
\SetAlgoNoEnd
\begin{algorithm}[hbtp]
\small
	\KwIn{Vectors $\vec{x^-} \leq \vec{x^+}$, $\vec{y^-} \leq \vec{y^+}$, $\vec{z^-} \leq \vec{z^+}$, $\vec\alpha \ge 0$}
	
	\KwOut{Some $\vec b\in \solutionset$, or \textit{infeasible} if $\solutionset=\emptyset$.}

	\textsl{\textbf{Note:} We represent vertex $(x,y)$ by real vector
	$(x,y,1)^T$,\quad(homogeneous coordinates).} \\
	
	\textbf{\textsl{[Step 1: Init]}}\\
	$u \gets $ deque with vertices of upper hull of $P_2$ (sorted by $x$-coordinates)\;
	$v \gets $ deque with vertices of lower hull of $P_2$ (sorted by $x$-coordinates)\;
	$S_u \gets I_3$;$\;$ $S_v \gets I_3$ 
		\tcc*{init maps to the identity matrix $I_3$ in $\mathbb R^{3\times3}$}
	
	\textbf{\textsl{[Step 2: Compute $P_i$]}}\\
	\For {$i\gets 3$ \KwTo $n$}{
		$S_u\gets \left(\begin{array}{ccc}1 & \alpha_i & z_i^+ \\ 0 & \alpha_i & z_i^+ \\ 0 & 0 & 1\end{array}\right) \cdot S_u$; \quad
		$S_v\gets \left(\begin{array}{ccc}1 & \alpha_i & z_i^- \\ 0 & \alpha_i & z_i^- \\ 0 & 0 & 1\end{array}\right) \cdot S_v$%
		\tcc*{Update maps}
		\tcc{Add new LL / UR vertex to hulls after transformation}
		$u.\mathit{push\_front}\bigl((S_u)^{-1}\cdot S_v\cdot v.\mathit{front}()\bigr)$;$\;$
		$v.\mathit{push\_back}\bigl((S_v)^{-1}\cdot S_u\cdot u.\mathit{back}()\bigr)$\;
		\For(\tcc*[h]{Cut left and right boundary})
		{$c \in \{u,v\}$}{%
			$r\gets \mathit{null}$;
			\lWhile {$c.\mathit{size}() \ge 1 \wedge (S_c\cdot c.\mathit{front}())_x < x_i^-$}
				{$r\gets S_c\cdot c.\mathit{pop\_front}()$}
			\lIf{$c.\mathit{empty()}$}{\KwRet{\textit{infeasible}}}
			\lIf{$r\ne \mathit{null}$}{%
				$\bigl\{q\gets S_c\cdot c.\mathit{front}()$; 
				$c.\mathit{push\_front}\bigl(
					(S_c)^{-1} \cdot \bigl(q + \smash{\frac{q_x - x_i^-}{q_x - r_x}} \cdot (r - q)\bigr)
				\bigr)\bigr\}$%
			}
			$r\gets \mathit{null}$;
			\lWhile {$c.\mathit{size}() \ge 1 \wedge (S_c\cdot c.\mathit{back}())_x > x_i^+$}
				{$r\gets S_c\cdot c.\mathit{pop\_back}()$}
			\lIf{$c.\mathit{empty()}$}{\KwRet{\textit{infeasible}}}
			\lIf{$r\ne \mathit{null}$}{%
				$\bigl\{q\gets S_c\cdot c.\mathit{front}()$; 
				$c.\mathit{push\_front}\bigl(
					(S_c)^{-1} \cdot \bigl(q + \smash{\frac{x_i^+ - q_x}{r_x - q_x}} \cdot (r - q)\bigr)
				\bigr)\bigr\}$%
			}
		}
		\tcc{Temporarily add vertices for vertical line segments (simplifies cuts)}
		\lIf{$(S_u\cdot u.\mathit{front}())_y > (S_v\cdot v.\mathit{front}())_y$}
			{$u.\mathit{push\_front}\bigl((S_u)^{-1}\cdot S_v\cdot v.\mathit{front}()\bigr)$}
		\lIf{$(S_u\cdot u.\mathit{back}())_y > (S_v\cdot v.\mathit{back}())_y$}
			{$v.\mathit{push\_back}\bigl((S_v)^{-1}\cdot S_u\cdot u.\mathit{back}()\bigr)$}
		\For(\tcc*[h]{Cut upper and lower boundary})
		{$c \in \{u,v\}$}{
			$r\gets \mathit{null}$;
			\lWhile {$c.\mathit{size}() \ge 1 \wedge (S_c\cdot c.\mathit{front}())_y < y_i^-$}
				{$r\gets S_c\cdot c.\mathit{pop\_front}()$}
			\lIf{$c.\mathit{empty()}$}{\KwRet{\textit{infeasible}}}
			\lIf{$r\ne \mathit{null}$}{%
				$\bigl\{q\gets S_c\cdot c.\mathit{front}()$; 
				$c.\mathit{push\_front}\bigl(
					(S_c)^{-1} \cdot \bigl(q + \smash{\frac{q_y - y_i^-}{q_y - r_y}} \cdot (r - q)\bigr)
				\bigr)\bigr\}$%
			}
	%
			$r\gets \mathit{null}$;
			\lWhile {$c.\mathit{size}() \ge 1 \wedge (S_c\cdot c.\mathit{back}())_y > y_i^+$}
				{$r\gets S_c\cdot c.\mathit{pop\_back}()$}
			\lIf{$c.\mathit{empty()}$}{\KwRet{\textit{infeasible}}}
			\lIf{$r\ne \mathit{null}$}{%
				$\bigl\{q\gets S_c\cdot c.\mathit{front}()$; 
				$c.\mathit{push\_front}\bigl(
					(S_c)^{-1} \cdot \bigl(q + \smash{\frac{y_i^+ - q_y}{r_y - q_y}} \cdot (r - q)\bigr)
				\bigr)\bigr\}$%
			}
			\smallskip
			$\mathit{ll}_c \gets S_c\cdot c.\mathit{front}()$;$\;$
			$\mathit{ur}_c \gets S_c\cdot c.\mathit{back}()$%
			\tcc*{Store current LL/UR for later}
			\tcc{Remove generated duplicate nodes and vertical segments}
			\tcc{($\mathit{sndFront/sndBack}$ denote the second / second-to-last elements)}
			\lWhile {$c.\mathit{size}() \ge 2 \wedge (S_c \cdot c.\mathit{front}())_x = (S_c \cdot c.\mathit{sndFront}())_x$}{$c.\mathit{pop\_front}()$}
			\lWhile {$c.\mathit{size}() \ge 2 \wedge (S_c \cdot c.\mathit{back}())_x = (S_c \cdot c.\mathit{sndBack}())_x$}{$c.\mathit{pop\_back}()$}
		}
		\tcc{Add stored LL/UR vertices if horizontal segments missing}
		\lIf{$(S_v\cdot v.\mathit{front}())_x > (S_u\cdot u.\mathit{front}())_x$}
			{$v.\mathit{push\_front}\bigl( (S_v)^{-1}\cdot \mathit{ll}_u  \bigr)$}
		\lIf{$(S_u\cdot u.\mathit{back}())_x < (S_v\cdot v.\mathit{back}())_x$}
			{$u.\mathit{push\_back}\bigl( (S_u)^{-1}\cdot \mathit{ur}_v  \bigr)$}
	}
		
	\textbf{\textsl{[Step 3: Compute $\vec b$]}}\\
	$(x,y)\gets S_u\cdot u.\mathit{back}()$;$\;$ $b_n\gets x$;$\;$
	$p\gets$ index of the last element of $u$\;
	\For{$i \gets n$ \KwTo $3$}{
		Revert $u$, $v$, $S_u$, $S_v$ to the previous stage;\\
		$x'\gets x-y$\;
		\lWhile{$p_x < x'$}{$p$++}
		Use $u_p$ and $u_{p-1}$ (if exists) to compute $y_m \gets \max\{ y'\mid(x',y')\in P_{i-1} \}$\;
		
		\leIf{$y\ge \alpha_i y_m + z_i^-$}{$(x,y)\gets (x',y_m)$}{$(x,y)\gets (x', (y-z_i^-)/\alpha_i)$}
		
		$b_{i-1}\gets  x$\;
	}
	
	$b_1\gets x-y$\;	
	\Return $(b_1,\ldots,b_n)$\;
	
	\caption{1st/2nd-Diff-Constrained Decision Algorithm}
	\label{alg:main}
\end{algorithm}

In this appendix, we give detailed pseudocode for our entire algorithm.
We also discuss the details on the backtracing step,
\ie, computing an actual solution $\vec b\in\solutionset$ from the 
(implicitly represented) feasibility polygons $P_2,\ldots,P_n$.
The final procedure is shown in Algorithm~\ref{alg:main}.

\subsection{Implicitly computing the $P_i$}

The main ideas have been described in Section~\ref{sec:algorithm}.
We represent points in homogeneous coordinates, \ie, 
$(x,y)$ becomes the column vector $(x,y,1)^T$.
That allows our transformation to be represented as a single matrix,
and we can compose them by multiplying the matrices.
We store the current matrix in Algorithm~\ref{alg:main}
in $S_u$ (for the upper hull) and $S_v$ for the lower hull.
$u$ and $v$ denote the deques storing the (untransformed) points 
of $\upperhull$ and $\lowerhull$ in homogeneous coordinates
and in sorted order.

To compute $P_i$ from $P_{i-1}$ (Step 2),
we update the transformation matrices and add the new points
to the hull (following Lemma~\ref{lem:full_mapping}).
After that (line 9), $u$ and $v$ represent $\sndOrdOnlyP$.
To implement the intersection with the half planes corresponding
to the value and first-order constraints at $i$,
we separately cut upper and lower hull with all four boundaries.
Since we store upper and lower hull separately, 
vertical line segments are not explicitly represented in either hull,
which requires some care in cutting with horizontal lines.
We therefore use the following strategy~--it is illustrated on an example in Figure~\ref{fig:example-cuts}: 
We first cut with the 
left and right boundaries (the value constraints),
then transform our representation temporarily to left and right hulls (lines~17--18),
which can easily handle cutting by horizontal line segments.
Cutting is always implemented as a linear scan of $u$ resp.\ $v$, 
during which all vertices outside the constraint halfplane are removed. 
Then we add a new vertex at the intersection
of the last segment with the constraint.
(We remember the last removed vertex $r$ for doing so.)

\begin{figure}[tbh]
	\hspace*{-1em}\begin{tikzpicture}[
		scale=1.5,
		value/.style={blue!80!black},
		first order/.style={green!60!black},
		snd order/.style={red!70!black},
		lower hull/.style={red!50!black},
		upper hull/.style={green!30!black},
	]
	\smaller
	\def\ls{.6}
	\def\r{.4pt}
	
	\begin{scope}[shift={(0.5,0)}]
		\fill[fill=cyan!50] 
				(0,-1) 
				-- (2,1)
				-- (4,2)
				-- (1,-1) 
			-- cycle;
		\draw[lower hull,very thick] 
				(0,-1) circle (\r) node[below left,scale=\ls] {$(0,-1)$}
				-- (2,1) circle (\r) node[above,scale=\ls] {$(2,1)$}
				-- (4,2) circle (\r) node[right,scale=\ls] {$(4,2)$} 
			;
		\draw[upper hull,very thick] 
				(0,-1) circle (\r) node[above right,scale=\ls] {} 
				-- (1,-1) circle (\r) node[below,scale=\ls] {$(1,-1)$} 
				-- (4,2) circle (\r) node[left,scale=\ls] {} 
			;
		\draw[dashed] (1.5, -1.2) -- ++(0,3.25);
		\draw[dashed] (1.95, -1.2) -- ++(0,3.25);
		\node at (1.5,-1.5) {(a)} ;
	\end{scope}
	
	\begin{scope}[shift={(2.5,0)}]
		\fill[fill=cyan!50] 
				(1.5,0.5) 
				-- (1.95,.95)
				-- (1.95,-.05)
				-- (1.5,-.5) 
			-- cycle;
		\draw[lower hull,very thick] 
				(1.5,0.5) circle (\r) node[below left,scale=\ls] {$(1.5,0.5)$}
				-- (1.95,0.95) circle (\r) node[above,scale=\ls] {$(1.95,0.95)$}
			;
		\draw[upper hull,very thick] 
				(1.5,-0.5) circle (\r) node[below,scale=\ls] {$(1.5,-0.5)$} 
				-- (1.95,-0.05) circle (\r) node[right,scale=\ls] {$(1.95,-0.05)$} 
			;
		\node at (1.5,-1) {(b)} ;
	\end{scope}
	
	\begin{scope}[shift={(4,0)}]
		\fill[fill=cyan!50] 
				(1.5,0.5) 
				-- (1.95,.95)
				-- (1.95,-.05)
				-- (1.5,-.5) 
			-- cycle;
		\draw[lower hull,very thick] 
				(1.5,-0.5) circle (\r) node[below left,scale=\ls] {}
				-- (1.5,0.5) circle (\r) node[below left,scale=\ls] {}
				-- (1.95,.95) circle (\r) node[below,scale=\ls] {}
			;
		\draw[upper hull,very thick] 
				(1.5,-0.5) circle (\r) node[above right,scale=\ls] {} 
				-- (1.95,-0.05) circle (\r) node[above,scale=\ls] {} 
				-- (1.95,0.95) circle (\r) node[above,scale=\ls] {} 
			;
		\draw[dashed] (1,-.3) -- ++ (1.4,0) ;
		\draw[dashed] (1,.4) -- ++ (1.4,0) ;
		\node at (1.5,-1) {(c)} ;
	\end{scope}
	
	\begin{scope}[shift={(5.5,0)}]
		\fill[fill=cyan!50] 
				(1.5,-0.3) 
				-- (1.5,0.4)
				-- (1.95,.4) 
				-- (1.95,-0.05) 
				-- (1.7,-.3)
			-- cycle;
		\draw[lower hull,very thick] 
				(1.5,-0.3) circle (\r) node[below left,scale=\ls] {$(1.5,-0.3)$}
				-- (1.5,0.4) circle (\r) node[below left,scale=\ls] {$(1.5,0.4)$}
			;
		\draw[upper hull,very thick] 
				(1.7,-0.3) circle (\r) node[below right,scale=\ls] {$(1.7,-0.3)$} 
				-- (1.95,-0.05) circle (\r) node[right,scale=\ls] {$(1.95,-0.05)$} 
				-- (1.95,0.4) circle (\r) node[above,scale=\ls] {$(1.95,0.4)$} 
			;
		\node at (1.5,-1) {(d)} ;
	\end{scope}
	
	\begin{scope}[shift={(7,0)}]
		\fill[fill=cyan!50] 
				(1.5,-0.3) 
				-- (1.5,0.4)
				-- (1.95,.4) 
				-- (1.95,-0.05) 
				-- (1.7,-.3)
			-- cycle;
		\draw[lower hull,very thick] 
				(1.5,0.4) circle (\r) node[below left,scale=\ls] {}
			;
		\draw[upper hull,very thick] 
				(1.7,-0.3) circle (\r) node[above right,scale=\ls] {} 
				-- (1.95,-0.05) circle (\r) node[above,scale=\ls] {} 
			;
		\node at (1.5,-1) {(e)} ;
	\end{scope}
	
	\begin{scope}[shift={(8.5,0)}]
		\fill[fill=cyan!50] 
				(1.5,-0.3) 
				-- (1.5,0.4)
				-- (1.95,.4) 
				-- (1.95,-0.05) 
				-- (1.7,-.3)
			-- cycle;
		\draw[lower hull,very thick] 
				(1.5,0.4) circle (\r) node[left,scale=\ls] {$(1.5,0.4)$}
				-- (1.95,0.4) circle (\r) node[above,scale=\ls] {$(1.95,0.4)$}
			;
		\draw[upper hull,very thick] 
				(1.5,-0.3) circle (\r) node[left,scale=\ls] {$(1.5,-0.3)$} 
				-- (1.7,-0.3) circle (\r) node[below,scale=\ls] {$(1.7,-0.3)$} 
				-- (1.95,-0.05) circle (\r) node[right,scale=\ls] {$(1.95,-0.05)$} 
			;
		\node at (1.5,-1) {(f)} ;
	\end{scope}
	\end{tikzpicture}
	\caption{%
		Example for lines~10--30 of Algorithm~\ref{alg:main}.
		(a) The polygon $\sndOrdOnlyP$ (after line 9).
		(b) After vertical cuts at $x_i^- = 1.5$ and $x_i^+=1.95$.
		(c) After adding the vertical line segments (line 18).
		(d) After horizontal cuts at $y_i^- = -0.3$ and $y_i^+ = -0.4$ (line 26); 
			$u$ and $v$ represent left and right hull of the correct polygon now,
			but we have to transform them back to upper and lower hull.
			For that, we store the LL and UR vertices.
		(e) After deletion of vertices with same $x$-coordinate (line 28);
			neither vertical, nor horizontal line segments are represented.
		(f) After adding the stored LL and UR vertices (line 30), we obtain the final upper and lower hulls.
	}
	\label{fig:example-cuts}
\end{figure}
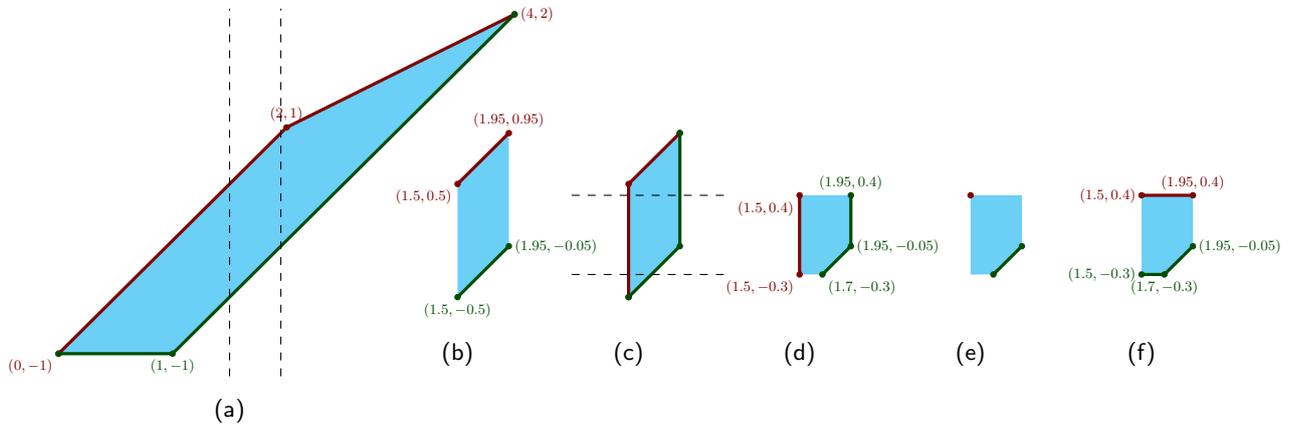

\subsection{Backtracing}

Suppose we have computed $P_n$ as described above,
and then partially backtraced through a sequence of feasible points. 
We are now at $(b_{i+1},b_{i+1}- b_i)$ in
$P_{i+1}$. Since $(b_{i+1}, b_{i+1}-b_i)=(x+y+z,y+z)$, $z\in [z_{i+1}^-,z_{i+1}^+]$ for
some (unknown) $(x,y)=(b_{i}, \alpha_{i+1}(b_{i}-b_{i-1}))\in P_i$, we can recover $x =
b_i$ from $(b_{i+1}, b_{i+1}-b_i)$ by subtracting the two coordinates of $(b_{i+1},
b_{i+1}-b_i)$. To recover $y$, suppose we can find $y_{\max}=\max\{y\mid (b_i,y)\in
P_{i}^{\alpha_{i+1}}\}$ efficiently. 
Since $\{y\mid (b_i,y)\in P_{i}^{\alpha_{i+1}}\}$ is
an interval, the following lemma allow us to find $b_i - b_{i-1}$.%
\begin{lemma}[back 1 step]
\label{lem:back_1_step}
	Let $f_{i+1}(x,y)=\{(x+y+z,y+z)\mid z\in [z_{i+1}^-,z_{i+1}^+]\}$. Either $(b_{i+1},
	b_{i+1}-b_i)\in f_{i+1}((b_i, y_{\max}))$ or $(b_{i+1},
	b_{i+1}-b_i)=(b_i+y+z_{i+1}^-,y+z_{i+1}^-)$ for some $y<y_{\max}$.
\end{lemma}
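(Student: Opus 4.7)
The lemma asks us to show that, up to the backtracing step, any feasible pair $(b_{i+1},b_{i+1}-b_i)\in P_{i+1}$ either comes from the ``extreme'' $y$-value $y_{\max}$ in $P_i^{\alpha_{i+1}}$ (with some $z\in[z_{i+1}^-,z_{i+1}^+]$), or else it comes from a strictly smaller $y$ with $z$ pinned to the \emph{lower} endpoint $z_{i+1}^-$. The strategy is to unpack the definition of $P_{i+1}$, then exploit the degree of freedom in the decomposition $(b_i+y+z,\,y+z)$ to shift $z$ down to $z_{i+1}^-$ as far as possible while paying for it by increasing $y$; a case split on whether this shift hits the ceiling $y_{\max}$ yields the two alternatives.

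\textbf{Step 1 (Unpack membership).} Since $(b_{i+1},b_{i+1}-b_i)\in P_{i+1}$, Definition~\ref{def:polytope_with_no_x_or_y} together with Lemma~\ref{lem:stretch} guarantees some witness $(b_i,y')\in P_i^{\alpha_{i+1}}$ and $z\in[z_{i+1}^-,z_{i+1}^+]$ with $(b_{i+1},b_{i+1}-b_i)=(b_i+y'+z,\,y'+z)$. In particular $y'\le y_{\max}$ by definition of $y_{\max}$.

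\textbf{Step 2 (Slide along the slope-$1$ segment).} Define the candidate
\[
  y \wrel{:=} y' + (z - z_{i+1}^-) \wrel{\ge} y'.
\]
The point of this reparametrization is that it preserves both coordinates of $f_{i+1}(\cdot)$: one checks instantly that $(b_i+y+z_{i+1}^-,\,y+z_{i+1}^-)=(b_i+y'+z,\,y'+z)=(b_{i+1},b_{i+1}-b_i)$. Geometrically, we are sliding along the slope-$1$ image segment $f_{i+1}(b_i,y')$ toward smaller $z$, trading $z$-budget for $y$-budget at a $1{:}1$ rate.

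\textbf{Step 3 (Case split and the obstacle).} The only obstacle is that this slide can overshoot either the upper envelope $y_{\max}$ of the fiber $\{y : (b_i,y)\in P_i^{\alpha_{i+1}}\}$ or the $z$-range $[z_{i+1}^-,z_{i+1}^+]$. We split on $y$:
\begin{itemize}[itemsep=0pt,topsep=2pt]
\item If $y<y_{\max}$, the second alternative of the lemma holds directly from the equation displayed in Step 2.
\item If $y\ge y_{\max}$, we slide in the opposite direction and instead set $z_\star := z - (y_{\max}-y')$. Using $y_{\max}\ge y'$ we get $z_\star\le z\le z_{i+1}^+$, and rewriting $y\ge y_{\max}$ as $z-z_{i+1}^-\ge y_{\max}-y'$ yields $z_\star\ge z_{i+1}^-$. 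Hence $z_\star\in[z_{i+1}^-,z_{i+1}^+]$ and $(b_{i+1},b_{i+1}-b_i)=(b_i+y_{\max}+z_\star,\,y_{\max}+z_\star)\in f_{i+1}(b_i,y_{\max})$, giving the first alternative.
\end{itemize}
The harder of the two cases is the second, since there we must \emph{reverse} the slide and verify that $z_\star$ still lies in the admissible $z$-interval; but the two inequalities needed are precisely the boundary conditions $y_{\max}\ge y'$ and $y\ge y_{\max}$. That closes the argument, and the dichotomy matches exactly the two branches used in the backtracing step of Algorithm~\ref{alg:main}.
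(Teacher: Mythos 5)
Your proof is correct and follows essentially the same route as the paper: decompose $(b_{i+1},b_{i+1}-b_i)=(b_i+y'+z,\,y'+z)$ via a witness and slide along the slope-$1$ segment by trading $z$ against $y$ at rate $1{:}1$. The only organizational difference is that the paper assumes the first alternative fails and deduces $b_{i+1}-b_i<y_{\max}+z_{i+1}^-$ (leaving implicit that $y'\le y_{\max}$, $z\le z_{i+1}^+$ rules out overshooting on the other side), whereas your explicit case split on $y$ versus $y_{\max}$ and the verification that $z_\star\in[z_{i+1}^-,z_{i+1}^+]$ makes that same point slightly more carefully.
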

Intuitively, a vertical line segment $L$ inside $P_{i}$ is mapped to a line-segment with
slope $1$ in $P_{i+1}$, because the line segments the points in $L$ are mapped to
lie all on the same line (overlapping with each other).
\begin{proof}
If $(b_{i+1}, b_{i+1}-b_i)\not\in f_{i+1}(b_i, y_{\max})$, by the maximality of $y_{\max}$,
$b_{i+1}-b_i<y_{\max}+z_{i+1}^-$. Since there exists $(b_i,y')$ such that $(b_{i+1},
b_{i+1}-b_i)\in f_{i+1}(b_i,y')$,  $(b_i+y'+z,y'+z)=(b_{i+1},b_{i+1}-b_i)$ for some $z\in
[z_{i+1}^-,z_{i+1}^+]$. Consider $f_{i+1}(b_i,y+z-z_{i+1}^-)$. Then $(b_{i+1},
b_{i+1}-b_i)=(b_i+(y'+z-z_{i+1}^-)+z_{i+1}^-,(y'+z-z_{i+1}^-)+z_{i+1}^-)$. Since
$b_{i+1}-b_i<y_{\max}+z_{i+1}^-$, $y'+z-z_{i+1}^-<y_{\max}$. The lemma is proven by letting
$y$ be $y'+z-z_{i+1}^-$.
\qed\end{proof}
In the former case of Lemma \ref{lem:back_1_step}, we can take $(x,y_{\max})$ as $(b_{i},
\alpha_{i+1}(b_{i}-b_{i-1}))$. In the latter case, we can take $(b_i,(
b_{i+1}-b_i)-z_{i+1}^-)$ as $(b_{i}, \alpha_{i+1}(b_{i}-b_{i-1}))$.

Since $y_{\max}$ is the $y$-coordinate of the intersection of $\upperhull(P_{i})$ and the
vertical line $(b_i,\cdot)$, to compute $y_{\max}$, we want to find two vertices in
$\upperhull(P_i)$, $(x_l,y_l)$ and $(x_r,y_r)$, such that $x_l\le b_i\le x_r$. $(b_i, y_{\max})$ is just the intersection of the line segment 
between $(x_l,y_l)$ and $(x_r,y_r)$ and the vertical line $(b_i,\cdot)$. The following lemma shows how to find
$(x_l,y_l)$ and $(x_r,y_r)$ efficiently using an amortized constant-time algorithm. 
\begin{lemma}[Computing $y_{\max}$]
\label{lem:retrieve}
Suppose  $(x_l,y_l)$ and $(x_r,y_r)$ are two vertices in $\upperhull(P_i)$, and some point
$(b_i,y)\in P_i$ satisfies $x_l\le b_i\le x_r$. Let $(x',y')$ be some point in $P_{i+1}$
with $(x',y')\in f_{i+1}(b_i,\alpha_{i+1}y)$. Then $x'\le
(f_{i+1}^+(x_r,\alpha_{i+1}y_r))_x$, where $(\cdot)_x$ means taking the $x$-coordinate of
a point and $(\cdot)_y$ takes the $y$-coordinate.
\end{lemma}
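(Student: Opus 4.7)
The plan is to unpack the definitions of $f_{i+1}$ and $f_{i+1}^+$ directly, reduce the claim to a single coordinate inequality, and then close it using the $\polytope$ property of $P_i$ established inductively in Section~\ref{subsec:transform}.

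First, by the definition of $f_{i+1}$ I would write $(x',y') = (b_i + \alpha_{i+1}y + z,\; \alpha_{i+1}y + z)$ for some $z \in [z_{i+1}^-, z_{i+1}^+]$, which immediately gives $x' \le b_i + \alpha_{i+1}y + z_{i+1}^+$. On the other side, $(f_{i+1}^+(x_r,\alpha_{i+1}y_r))_x = x_r + \alpha_{i+1}y_r + z_{i+1}^+$, so the desired inequality reduces to
\[
  b_i + \alpha_{i+1}\,y \wrel\le x_r + \alpha_{i+1}\,y_r.
\]

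The real content of the proof is the bound $y \le y_r$; once this is in hand, combining it with the hypothesis $b_i \le x_r$ and the fact that $\alpha_{i+1} > 0$ yields the displayed inequality. To obtain $y \le y_r$ I would invoke that $P_i$ is $\polytope$: every edge has non-negative slope, so the upper boundary of $P_i$, viewed as a (concave) piecewise-linear function of $x$, is non-decreasing. Because $(b_i,y)\in P_i$, its $y$-coordinate lies at most on the upper boundary over $x = b_i$; because $(x_r,y_r)\in \upperhull(P_i)$ and $b_i\le x_r$, monotonicity of the upper boundary forces this value to be at most $y_r$.

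I do not expect any serious obstacle in this argument; the only subtlety worth being explicit about is where the $\polytope$ property, rather than mere convexity of $P_i$, enters. Without it the upper hull could first rise and then fall, so that a point $(b_i,y)$ to the left of $(x_r,y_r)$ might have $y > y_r$, and the coordinate inequality above would fail. The sloped property is precisely what rules this out and makes the proof go through in one line.
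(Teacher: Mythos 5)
Your proof is correct and follows essentially the same route as the paper's: both arguments come down to the two inequalities $b_i \le x_r$ and $y \le y_r$ combined with the explicit coordinate form of $f_{i+1}^+$ (the paper merely packages this as a contradiction via the identity $x'-y'=b_i$). If anything, your version is slightly more careful, since you explicitly justify $y \le y_r$ from the $\polytope$ property (monotonicity of the upper boundary), a step the paper's proof uses implicitly in the line $\alpha_{i+1}y+z_{i+1}^+ \le \alpha_{i+1}y_r+z_{i+1}^+$.
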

\FullVersion{
\begin{proof}
Assume towards a contradiction that $x'>(f_{i+1}^+(x_r,\alpha_{i+1}y_r))_x$. 
Since $x'-y' =b_i \le x_r=(f_{i+1}^+(x_r,\alpha_{i+1}y_r))_x - (f_{i+1}^+(x_r,\alpha_{i+1}y_r))_y$,
we have 
$y' > (f_{i+1}^+(x_r,\alpha_{i+1}y_r))_y$. But $y'=\alpha_{i+1}y+k\le
\alpha_{i+1}y+z_{i+1}^+\le \alpha_{i+1}y_r+z_{i+1}^+= (f_{i+1}^+(x_r,\alpha_{i+1}y_r))_y$.
Contradiction.
\qed\end{proof}
}
The amortized constant-time algorithm to retrieve $(b_n,\ldots,b_1)$ depends on the
implementation of the deques. 
Since we will add $n$ vertices to the deques during the
whole algorithm, the (textbook) fixed-size array-based implementation suffices;
we recall it to fix notation.
A deque $d$ is represented by array $A$ and two indices
$p_l$, $p_r$. $p_l$ is the index of the first element of $d$ and $p_r$ is the index of the
last element. If we want to add an element $e$ to the left of the deque, the two
operations $p_l\gets p_l-1$, $A[p_l] = e$ suffice. Similarly, we can add/pop
elements from left/right. During our algorithm, $p_l$ (resp. $p_r$) can move to the left
(resp.\ right) by at most $n$ positions, so $A$ can be an array of length $2n+O(1)$. If we
store the vertices of $P_2$ in the middle of $A$ initially, we never exceed the boundaries
of $A$ when running the algorithm.

\begin{definition}[Position]
\label{def:pos}
We define $\mathit{pos}_i(x')$ as the smallest index (in the array representing deque $u$) 
of a vertex of $\upperhull(P_i(\cdot))$
with $x$-coordinate at least $x'$.
\end{definition}
Note that adding or removing elements does not change the vertex at a given index 
(unless that vertex itself is removed).
\begin{lemma}[Monotonicity of positions]
\label{lem:retrieve_amortization}~\\
$\mathit{pos}_{i}(b_i) \ge \mathit{pos}_{i+1}(x')$ for some $(x',y')\in f_{i+1}(b_i,\alpha_{i+1}y)$.
\end{lemma}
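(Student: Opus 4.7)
Set $j := \mathit{pos}_i(b_i)$ and let $v_j = (x_r, y_r)$ be the upper-hull vertex of $P_i$ stored at array index $j$. My first move is to invoke Lemma~\ref{lem:retrieve} with this vertex, which yields $v'_x \ge x'$ for $v' := f_{i+1}^+(x_r,\alpha_{i+1}y_r)$. So $v'$ is already geometrically suitable; the task reduces to showing that $v'$, or an acceptable substitute, sits at array index at most $j$ in $\upperhull(P_{i+1})$.

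The plan is to track $v'$ through the transition $P_i \to P_{i+1}$ step by step. The $y$-scaling by $\alpha_{i+1}$ and the application of $f_{i+1}^+$ are implemented through the cumulative matrix $S_u$ and leave array indices unchanged; the prepend of the LL vertex only decreases $p_l$. Hence $v'$ occupies array index $j$ in the intermediate upper hull of $\sndOrdOnlyP$. What remains is the four sequential cuts by half-planes (two $x$-cuts, then two $y$-cuts on the temporary left/right-hull representation). Each such cut removes a contiguous end of the deque and inserts at most one new vertex at the intersection with the cutting line; with the array-based deque, this replacement vertex occupies an array index no larger than the most extreme array index among the removed range on that side.

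A case analysis on whether $v'$ survives then delivers the bound. The left $x$-cut cannot remove $v'$ because $v'_x \ge x' \ge x_{i+1}^-$, and the bottom $y$-cut cannot remove $v'$ either: reusing the calculation from the proof of Lemma~\ref{lem:retrieve} gives $v'_y = \alpha_{i+1}y_r + z_{i+1}^+ \ge y' \ge y_{i+1}^-$. If the right $x$-cut removes $v'$, the replacement vertex lies on the line $x = x_{i+1}^+ \ge b_{i+1} = x'$ and sits at some array index $\le j$, so it witnesses $\mathit{pos}_{i+1}(x') \le j$. In every case so far, a vertex of $\upperhull(P_{i+1})$ with $x$-coordinate $\ge x'$ lives at array index $\le j$.

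The main obstacle is the top $y$-cut case: when $v'_y > y_{i+1}^+$, its replacement lies on $y = y_{i+1}^+$ and its $x$-coordinate need not, a priori, exceed $x'$. This is where the ``for some $(x',y')$'' clause in the lemma statement becomes essential. Combined with Lemma~\ref{lem:back_1_step}, which exhibits the backtracing pair $(b_{i+1},b_{i+1}-b_i)$ as either coming from the upper-hull choice $y = y_{\max}$ (where the top-cut replacement on $y = y_{i+1}^+$ is still to the right of the corresponding $x'$) or from the explicit branch with $z = z_{i+1}^-$, we choose the witness $(x',y') \in f_{i+1}(b_i,\alpha_{i+1}y)$ aligned with whichever of these alternatives matches the surviving replacement vertex. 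This disjunction, together with the array-index bookkeeping above, yields $\mathit{pos}_i(b_i) \ge \mathit{pos}_{i+1}(x')$ for a permissible $(x',y')$.
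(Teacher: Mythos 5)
Your strategy coincides with the paper's: apply Lemma~\ref{lem:retrieve} to the vertex $(x_r,y_r)$ stored at index $j=\mathit{pos}_i(b_i)$ to get $\bigl(f_{i+1}^+(x_r,\alpha_{i+1}y_r)\bigr)_x\ge x'$, and then argue that this transformed vertex, or a substitute for it, occupies an array index at most $j$ in $\upperhull(P_{i+1})$. The paper's proof is exactly this in three sentences, silently assuming that $f_{i+1}^+(x_r,\alpha_{i+1}y_r)$ survives the truncation step as a stored hull vertex; you are right to examine the cuts, and your treatment of the left/right $x$-cuts and of the bottom $y$-cut is sound.

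The top $y$-cut case, however, which you correctly flag as the main obstacle, is not closed by your final paragraph. Declaring that you ``choose the witness $(x',y')$ aligned with whichever alternative matches the surviving replacement vertex'' is not an argument: in the backtracing, $x'=b_{i+1}$ is fixed by the preceding step, and the pointer must land on the hull segment of $P_i$ straddling $b_i$, so the ``for some'' clause gives you essentially no freedom. Concretely, if $v'=f_{i+1}^+(x_r,\alpha_{i+1}y_r)$ is the only vertex removed by the cut $y\le y_{i+1}^+$ and the crossing point $w$ of the hull with that line has $w_x<x'$ (possible when the backtracing point is reached with some $z<z_{i+1}^+$ and the hull segment has small slope), then the only vertex of $\upperhull(P_{i+1})$ with $x$-coordinate at least $x'$ is the upper-right vertex appended after the cuts, and it lands at index $j+1>j$. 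For example, take $\alpha_{i+1}=1$, $[z_{i+1}^-,z_{i+1}^+]=[-10,0]$, $\upperhull(P_i)=\bigl\{(0,0),(10,1)\bigr\}$ at indices $5,6$, $b_i=9$, backtracing point $(9,0.9)$ mapped with $z=-0.4$ to $(x',y')=(9.5,0.5)$, and $y_{i+1}^+=0.5$: the cut replaces $(11,1)$ by $(5.5,0.5)$ at index $6$ and appends $(10.5,0.5)$ at index $7$, so $\mathit{pos}_{i+1}(9.5)=7>6=\mathit{pos}_i(9)$. In this configuration the index inequality you need fails for the relevant witness; to finish you would have to show that such a configuration cannot arise along an actual backtracing sequence, or change the bookkeeping for the appended upper-right vertex --- neither of which your write-up (nor, for that matter, the paper's own three-line proof) supplies.
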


\FullVersion{
\begin{proof}
By Lemma \ref{lem:retrieve}, $x'\le (f_{i+1}^+(x_r,\alpha_{i+1}y_r))_x$. So
$f_{i+1}^+(x_r,\alpha_{i+1}y_r)$ is stored after $pos_{i+1}(x')$. And since our algorithm
stores $f_{i+1}^+(x_r,\alpha_{i+1}y_r)$ at the same place as $(x_r,y_r)$,
$pos_{i+1}(x')\le pos_{i}(b_i)$.
\qed\end{proof}
}
Lemma~\ref{lem:retrieve_amortization} allows us to find $\mathit{pos}_i(z)$ by moving a pointer
monotonically to the right. 
Thus, we can retrieve $b_n,\ldots,b_1$ in order by unrolling our linear
algorithm for the decision problem and moving the pointer $\mathit{pos}_i(z)$. 
This process takes $O(n)$ time overall.

\subsection{Analysis}

We conclude with the proof of our main theorem.

\begin{proof}[of Theorem \ref{thm:main}]
The correctness of Algorithm~\ref{alg:main} follows from the preceding discussions:
By Lemma~\ref{lem:full_mapping}, the iterative transformations compute the $P_i$ as defined
in~\eqref{def:each_polytope}, and $\solutionset \ne \emptyset$ iff $P_n\ne \emptyset$.
Moreover, Lemma~\ref{lem:back_1_step} shows that, when $\solutionset \ne \emptyset$,
Step 3 computes a valid $\vec b\in \solutionset$.
It remains to analyze the running time.
\begin{itemize}
\item Step~1 
	takes $O(1)$ time since the vertices of $P_2$ are a subset of the (at most) 12 intersection
	points of the defining lines.
	($P_2$ is the trapezoid spanned by
	$(x_2^-,x_2^- - x_1^+),(x_2^-,x_2^- - x_1^-),(x_2^+,x_2^+ - x_1^+),(x_2^+,x_2^+ - x_1^-)$,
	intersected with the halfspaces $y \ge y_2^-$ and $y \le y_2^+$.)
\item Step 2.
	The operations inside the loops are all constant-time and the outer loop runs $O(n)$ times.
	Moreover, the inner while-loops all remove a node from a deque, so their
	total cost over all iterations of the for-loop is $O(n)$, too: 
	We start with $O(1)$ vertices and adding at most $O(n)$ vertices
	throughout the entire procedure (Lemma~\ref{lem:invariant}), so 
	we cannot remove more than $O(n)$ vertices.
\item Step~3. 
	All operations except for the first line inside the for-loop take constant time.
	The inner while-loop runs for overall $O(n)$ iterations, since
	$p$ only moves right and we add $O(n)$ vertices in total. 
	
	It remains to implement the first line of the loop body in $O(n)$ overall time.
	To be able to undo the changes to $u$, $v$, $S_u$, $S_v$, we keep a
	\emph{log} for each instruction executed in Step 2, so that we can undo their changes here 
	(in the opposite order).
	Since Step 2 runs in $O(n)$ total time, the rollback also runs in $O(n)$ time.
\end{itemize}
Since all three steps run in linear time, so does the whole algorithm.
\qed\end{proof}

\FloatBarrier

		\section{Generalization to DAGs is hard}
\label{app:dags-hard}

In this appendix, we will give a natural generalization of Definition~\ref{def:problem} 
to arbitrary DAGs and investigate its complexity.
Our original setting with differences of adjacent indices only
corresponds to a directed-path graph.

In light of rather general results for isotonic regression, the path setting 
might appear quite restrictive; we will argue here why
these conditions probably cannot be relaxed much further if we want an 
$O(n)$ time algorithm.

\begin{definition}\label{def:dagproblem}
	Suppose we are given a directed acyclic graph $G = (V,E)$ with $m = |E|$ edges and
$m_{\pa}$ number of length two directed paths in $G$ , $n$-dimensional vectors $x^- \leq
x^+$, $m$ dimensional vector $y^- \leq y^+$,   and $m_{\pa}$ dimensional vectors $z^- \leq
z^+$ and $\alpha \ge 0$.  We define $\solutionset_G$ to be the set of all $n$-dimensional
vectors $b$ such that $x_i^- \leq b_i \leq x_i^+$ for all $i$,  $y_{ij}^- \leq b_j - b_{i}
\leq y_{ij}^+$ for all edges $(i,j) \in E$, and  $z_{ijk}^- \leq (b_k - b_{j}) -
\alpha_{ijk}(b_{j} - b_{i}) \leq z_{ijk}^+$ for all pairs of edges $(i,j), (j,k) \in E.$
\end{definition}	

In contrast to Theorem~\ref{thm:main}, we show that determining if $\solutionset_G$ if
empty or not is as hard as solving linear programs.

\begin{theorem}
\label{thm:lowerbound}
With notation as in Definition~\ref{def:dagproblem}, if we can determine $\solutionset_G$
is empty or not in time $f(n+m+m_{\pa})$, then we can determine feasibility of any set of
linear constraints defined by $s$ bounded integer coefficients in $c_1f(c_2s\log M))$
time, where $c_1$ and $c_2$ are two constants and the absolute value of each coefficient
in the linear constraints is no more than $M$.
\end{theorem}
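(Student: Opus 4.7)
The plan is to reduce linear-programming feasibility with $s$ nonzero integer coefficients of absolute value at most $M$ to the DAG feasibility problem of Definition~\ref{def:dagproblem}. Given the LP, I would construct a DAG $G$ with $n+m+m_{\pa}=O(s\log M)$ such that $\solutionset_G\ne\emptyset$ iff the LP is feasible. The assumed decision algorithm then answers in time $f(c_2 s\log M)$, and the reduction itself runs in $O(s\log M)$ time, giving the claimed $c_1 f(c_2 s\log M)$ bound. The construction introduces a pinned vertex $b_0$ with $x_0^-=x_0^+=0$, a vertex $b_j$ per LP variable $x_j$ (carrying any variable bounds), and, for each LP inequality $\sum_j A_j x_j\le c$, an auxiliary vertex $u$ constrained to equal $\sum_j A_j b_j$ by a chain of length-2 path equalities, together with the value constraint $u\le c$.

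The heart of the construction is the arithmetic encoding of affine combinations using only length-2 path constraints. Taking $z^-=z^+$ on a path $(i,j,k)$ makes the second-order constraint an equality $b_k=(1+\alpha_{ijk})b_j-\alpha_{ijk}b_i+z$; this realizes an affine combination of two existing values whose coefficients sum to $1$, with the positive weight $1+\alpha\ge 1$ and the negative weight $-\alpha\le 0$ since $\alpha\ge 0$. Together with $b_0=0$, this primitive is enough to derive a vertex of value $k\cdot b_j$ for any integer $k$ with $|k|\le M$ in $O(\log M)$ paths: repeated doubling along paths $(b_0,w,w')$ with $\alpha=1,z=0$ gives $w'=2w$, so $O(\log M)$ doublings produce $2^i b_j$ for the relevant $i$, and the $O(\log M)$ powers selected by the binary expansion of $k$ can then be combined. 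After producing $v_j=A_j b_j$ for each nonzero coefficient, I would accumulate the $v_j$'s into $u=\sum_j v_j$ using a constant number of length-2 path constraints per accumulation step.

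The main obstacle is exactly this accumulation step $u_{\mathrm{new}}=u_{\mathrm{old}}+v$: a single second-order constraint cannot realize it, because matching the required coefficients $(+1,+1)$ against $((1+\alpha),-\alpha)$ forces $\alpha<0$, violating $\alpha\ge 0$. To sidestep the sign restriction, I would use the dummy $b_0=0$ and a constant-length sequence of second-order extrapolations. Tracking the coefficient vector $(f_a,f_b)$ of any derived value in terms of $a=u_{\mathrm{old}}$ and $b=v$, the rule $(f_k)=(1+\alpha)(f_j)-\alpha(f_i)$ with $\alpha\ge 0$ is an extrapolation in the plane starting from the frame $\{(0,0),(1,0),(0,1)\}$ (for $b_0,a,b$), and an explicit short path through intermediate coefficient vectors such as $(-1,0)$, $(-2/3,0)$, $(2,3)$, $(2,4)$, $(2,1)$, $(3,1)$ reaches $(1,1)$ in a constant number of extrapolations, yielding the desired $u_{\mathrm{new}}$. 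Composing the scaling and accumulation primitives produces $u_i$ for each inequality using $O(\log M)$ constraints per nonzero LP coefficient, and so $O(s\log M)$ in total, completing the reduction.
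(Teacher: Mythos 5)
Your reduction is, at its core, the same one the paper uses: pin a vertex to zero with a value constraint, turn second-order constraints with $z^-=z^+$ into linear equalities, realize integer coefficients up to $M$ by repeated doubling against the zero vertex and the binary expansion (giving the $O(\log M)$ factor per nonzero coefficient), aggregate each inequality's left-hand side into a single auxiliary variable, and bound it with a value constraint. The one place you diverge is the addition gadget, and there your route is more complicated than necessary: you insist that the new variable sit at the \emph{endpoint} $b_k$ of the length-2 path, which indeed makes $(+1,+1)$ unreachable in one step, and you then compensate with a chain of extrapolations through intermediate coefficient vectors. I checked that your chain $(0,0),(1,0),(0,1)\to(-1,0)\to(-2/3,0)\to(2,3)\to(2,4)\to(2,1)\to(3,1)\to(1,1)$ is realizable with $\alpha\ge0$ at every step, so the gadget is valid and still $O(1)$ constraints per addition. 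The paper avoids the issue entirely by placing the \emph{new} variable at the middle index of the path: with $\alpha=1$, $z=0$ the constraint reads $2b_j=b_i+b_k$, so the fresh variable is the average of the two summands, and one doubling (or absorbing the power of two into the final threshold) recovers the sum; the paper then aggregates positive and negative terms separately and compares them with a first-order constraint. One caveat you should address explicitly: in Definition~\ref{def:dagproblem} \emph{every} length-2 path of the DAG carries a constraint and contributes to $m_{\pa}$, so a single zero vertex $b_0$ that acquires both many in-edges (from your accumulation gadgets) and many out-edges (from the doublings) would create $\Theta(\deg_{\mathrm{in}}\cdot\deg_{\mathrm{out}})$ spurious paths and break the $O(s\log M)$ size bound; this is easily repaired by using a fresh zero-pinned vertex per gadget (the paper is similarly terse on this point), but it needs to be said for the stated running time to follow.
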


Our reduction to prove Theorem~\ref{thm:lowerbound}  is closely motivated by the hardness of isotropic
total variation from~\cite{Kyng2017}, as well as subsequent
works on extending such hardness results to positive linear programs.
Compared to these results though, it sidesteps linear systems, and
is a more direct invocation of the completeness of 2-commodity flow
linear programs from~\cite{Itai1978}.

 We first consider a more restricted class of problems than
Definition~\ref{def:dagproblem} allows (where all the $\alpha$'s in
Definition~\ref{def:dagproblem} are set to be $1$). Formally we define the problem as:
\begin{definition}
\label{def:dagproblem2}
A generalized 
second-order constrained feasibility problem
is defined by variables $b_{1} \ldots b_n$, combined
with a set of $m$ constraints parameterized by
\begin{enumerate}
\item Upper and lower bounds on the variables $x_i^{-}$ and $x_i^{+}.$ 
\item Upper and lower bounds on the first order differences  $y_i^{-}$ and $y_i^{+}$ and corresponding indices $p_{i} < q_{i}.$
\item Upper and lower bounds on the second order differences $z_i^{-}$ and $z_{i}^{+}$ and corresponding indices $r_{i} < s_{i} < t_{i}$
\end{enumerate}
and constraints
\begin{description}
\item[Value Constraints:] $ x_{i}^{-} \leq b_i \leq x_{i}^{+}$ 
\item[First Order Constraints:]   $y_i^{-} \leq b_{q_{i}} - b_{p_{i}}  \leq y_i^{+}$
\item[Second Order Constraints:] 
$z_{i}^{-}
\leq
\left( b_{t_{i}} - b_{s_{i}} \right)
-
\left( b_{s_{i}} - b_{r_{i}} \right) \leq
z_{i}^{+}. $
\end{description}
The goal is to decide whether there exists
$b_{1}, \ldots, b_{n}$ that satisfy all these constraints
simultaneously.
\end{definition}

\begin{proof}[of Theorem \ref{thm:lowerbound}]
It is easy to see that the problem defined in Definition~\ref{def:dagproblem2} is a
special case of the problem in Definition~\ref{def:dagproblem}. This is obtained by
forming a DAG with edges $(p_i,q_i)$, $(r_i,s_i)$, $(s_i,t_i)$ for all 
$p_i$, $q_i$, $r_i$, $s_i$, $t_i$.  
We will prove that a general linear programming feasibility problem with $s$ 
polynomially-bounded integer coefficients can be expressed as a second-order-constrained feasibility
problem (Definition~\ref{def:dagproblem2}). In particular, we will show that a feasibility
of a set of linear constraints containing at most $s$ non-zero coefficients whose absolute
values are integers no more than $M$ can be reduced to $O(s \log M)$ value, first order
and second order constraints as in Definition~\ref{def:dagproblem2}.

Note that the second constraint in Definition~\ref{def:dagproblem2} is the same as
\[
z_{i}^{-}
\leq
2 b_{q_{i}} - b_{r_{i}} - b_{p_{i}}
\leq z_{i}^{+}.
\]

In particular, it allows us to create constraints of the form
\[
2 b_{q_{i}} = b_{p_{i}} + b_{r_{i}}.
\]
We will now show how we can restate a feasibility of a set of general linear constraints
can be expressed as a second order constrained feasibility problem as in
Definition~\ref{def:dagproblem}.
 The main idea will be clear when we consider a linear constraint of the form 
\[
b_{i_1} + b_{i_2} + \ldots b_{i_{k}}
\leq
c_{i},
\]
with $k$ a power of $2$, and
$i_1 < i_2 < \ldots < i_k$ in increasing order. To express this in terms of second order constraints, 
 we can introduce new variables
\begin{align*}
i_1 &< i_{12} < i_{2}\\
i_3 &< i_{34} < i_{4}\\
\ldots
\end{align*}
and use $b_{i_{12}}$ to represent the sum of $b_{i_1}$
and $b_{i_2}$ and so on.
Repeating this halves the value of $k$,
but aggregates the whole sum into a single variable. Therefore, we can express the above
linear constraint as one value constraint
\[
b_{i_{12\ldots k}}
\leq
x^{+}_{12\ldots k}
\vcentcolon=
c_i
\]
and $k-1$ second order constraints  
\[b_{i_{12}} = b_{i_1} + b_{i_2}, \ldots,b_{i_{(k-1)k}} =  b_{i_{k-1}} + b_{i_{k}} ,
\ldots, b_{i_{1\ldots k}} = b_{i_{1\ldots k/2}} + b_{i_{k/2+1\ldots k}} .\] 
In case $k$ is not a power of $2$,
we can add dummy variables whose values we restrict to zero
using the value constraints. This process uses at most $k$ value constraints. So we have
shown that we can express any linear constraint of the form $b_{i_1} + b_{i_2} + \ldots
b_{i_{k}} \leq c_{i}$ in terms of $O(k)$ second order constraints and $O(k)$ value
constraints.

Now consider the case with both positive and negative values in the linear constraint 
$$b_{i_1} \pm \ldots \pm b_{i_k} \leq c_i.$$
We can aggregate the sums of the variables with positive coefficients and negative
coefficients separately, and let us denote the resulting variables by
$b_{\text{pos}},b_{\text{neg}}.$
We can now bound the difference using a first order constraint of the form
\[
b_{\text{pos}} - b_{\text{neg}}
\leq
c_i.
\]
This results in additional $O(1)$ first order constraints for each linear constraint. 

Finally, when the coefficients are arbitrary integers, we can do pairing based on the
binary representation. The second order constraint and value constraint allows us to
create constrains of the form 
\[
0\le 2b_i-b_0-b_j\le 0
\]\[
0\le b_0\le 0
\] which are equivalent to 
\[
b_j=2b_i.
\] 
So we can introduce new variables $d_{kj}$ representing $2^kb_j$ for any $1\le k\le c$
where $c$ is a constant. Thus, given any linear constraint in $k$ variables with integer
coefficients that are bounded by $M$, we first represent each coefficient by its binary
representation, increasing the number of non-zero coefficients by $O(\log M)$ times and
creating $O(k\log M)$ second order constraints and value constraints. Then all the
coefficients in the linear constraints are $+1$ or $-1$ and we can use the reduction
above. In summary, we can solve any linear programming feasibility problem with $O(s)$
non-zero coefficients which are integers bounded by $M$ by a generalized second-order
constrained feasibility problem of $O(s\log M)$ constraints. This together with our
assumption of an algorithm solving generalized second-order constrained feasibility
problem in $f(\cdot)$ time prove the theorem.
\qed\end{proof}

	\end{appendix}

\end{document}